\newcommand{\myTimes}{
  \usepackage{txfonts}
  \renewcommand{\sfdefault}{cmss}
  \DeclareMathAlphabet{\mathsf}{OT1}{cmss}{m}{n}
  \SetMathAlphabet{\mathsf}{bold}{OT1}{cmss}{b}{n}
  \renewcommand{\ttdefault}{cmtt}
  \DeclareMathAlphabet{\mathtt}{OT1}{cmtt}{m}{n}
  \SetMathAlphabet{\mathtt}{bold}{OT1}{cmtt}{b}{n}

\DeclareSymbolFont{cmsymbols}     {OMS}{cmsy}{m}{n}
\DeclareSymbolFont{ztmcmsymbols}     {OMS}{ztmcm}{m}{n}
\DeclareRobustCommand*{\cmmathcal}[1]{\gdef\F@ntPrefix{m@thcalch@r}%
  \@EachCharacter ##1\@EndEachCharacter}

\DeclareMathSymbol{\m@thcalch@rA}{\mathord}{cmsymbols}{65}
\DeclareMathSymbol{\m@thcalch@rB}{\mathord}{cmsymbols}{66}
\DeclareMathSymbol{\m@thcalch@rC}{\mathord}{cmsymbols}{67}
\DeclareMathSymbol{\m@thcalch@rD}{\mathord}{cmsymbols}{68}
\DeclareMathSymbol{\m@thcalch@rE}{\mathord}{cmsymbols}{69}
\DeclareMathSymbol{\m@thcalch@rF}{\mathord}{cmsymbols}{70}
\DeclareMathSymbol{\m@thcalch@rG}{\mathord}{cmsymbols}{71}
\DeclareMathSymbol{\m@thcalch@rH}{\mathord}{cmsymbols}{72}
\DeclareMathSymbol{\m@thcalch@rI}{\mathord}{cmsymbols}{73}
\DeclareMathSymbol{\m@thcalch@rJ}{\mathord}{cmsymbols}{74}
\DeclareMathSymbol{\m@thcalch@rK}{\mathord}{cmsymbols}{75}
\DeclareMathSymbol{\m@thcalch@rL}{\mathord}{cmsymbols}{76}
\DeclareMathSymbol{\m@thcalch@rM}{\mathord}{cmsymbols}{77}
\DeclareMathSymbol{\m@thcalch@rN}{\mathord}{cmsymbols}{78}
\DeclareMathSymbol{\m@thcalch@rO}{\mathord}{cmsymbols}{79}
\DeclareMathSymbol{\m@thcalch@rP}{\mathord}{cmsymbols}{80}
\DeclareMathSymbol{\m@thcalch@rQ}{\mathord}{cmsymbols}{81}
\DeclareMathSymbol{\m@thcalch@rR}{\mathord}{cmsymbols}{82}
\DeclareMathSymbol{\m@thcalch@rS}{\mathord}{cmsymbols}{83}
\DeclareMathSymbol{\m@thcalch@rT}{\mathord}{cmsymbols}{84}
\DeclareMathSymbol{\m@thcalch@rU}{\mathord}{cmsymbols}{85}
\DeclareMathSymbol{\m@thcalch@rV}{\mathord}{cmsymbols}{86}
\DeclareMathSymbol{\m@thcalch@rW}{\mathord}{cmsymbols}{87}
\DeclareMathSymbol{\m@thcalch@rX}{\mathord}{cmsymbols}{88}
\DeclareMathSymbol{\m@thcalch@rY}{\mathord}{cmsymbols}{89}
\DeclareMathSymbol{\m@thcalch@rZ}{\mathord}{cmsymbols}{90}

\let\mathcal\cmmathcal
}
\renewcommand{\algocf@caption@tworuled}{\box\algocf@capbox\kern-\interspacetitleruled}
\newenvironment{algorule}[1][htb]{\begin{algorithm2e}[#1]\SetAlFnt{\small\sf}\advance\lineskip by
2pt}{\end{algorithm2e}}
\newcommand{\mykeyword}[1]{\text{\KwSty{#1} }}
\newcommand{\algOr}{\mykeyword{or}}
\newcommand{\algAnd}{\mykeyword{and}}
\newcommand{\algNot}{\mykeyword{not}}
\newcommand{\shownotes}{1}
\newcommand{\authnote}[3]{{ \textcolor{#3}{$\langle\hspace{-0.2em}\langle$\textsf{\footnotesize #1: #2}$\rangle\hspace{-0.2em}\rangle$}}}
\newcommand{\authnote}[2]{}
\newcommand{\sign}{{\operatorfont sign}}
\newcommand{\cei}[1]{{\lceil #1\rceil}}
\def\@@enum@[#1]{%
  \@enLab{}\let\@enThe\@enQmark
  \@enloop#1\@enum@
  \expandafter\edef\csname label\@enumctr\endcsname{\the\@enLab}%
  \expandafter\let\csname the\@enumctr\endcsname\@enThe
  \csname c@\@enumctr\endcsname7
  \expandafter\settowidth
            \csname leftmargin\romannumeral\@enumdepth\endcsname
            {\the\@enLab\hspace{\labelsep}}%
  \@enum@}
\newenvironment{bullets}{\leftmargini=1em\begin{itemize}}{\end{itemize}}
\newenvironment{flushdescription}{\leftmargini=1em\begin{description}}
{\end{description}}
\newcommand\ang[1]{{\mathopen\langle #1\mathclose\rangle}}
\newcommand{\df}[1]{\emph{#1}}
\newcommand\set[1]{\mathopen\{#1\mathclose\}}
\newcommand\Paren[1]{{\left( #1\right)}}
\newcommand{\lint}[2]{[#1,#2)}
\newcommand{\fld}[1]{\ensuremath{\textit{#1}}}
\newcommand{\rul}[1]{\ensuremath{\texttt{\slshape #1\/}}}
\newcommand{\increment}[1]{#1\mathord{+}\mathord{+}}
\newcommand{\decrement}[1]{#1\mathord{-}\mathord{-}}
\newcommand{\encode}{\mathrm{encode}}
\newcommand{\decode}{\mathrm{decode}}
\newcommand{\D}{D}
\newcommand{\f}{f}
\newcommand{\front}{\mathrm{front}}
\newcommand{\R}{R}
\newcommand{\Alarm}{\rul{Alarm}}
\newcommand{\Comp}{\rul{Compute}}
\newcommand{\Calc}{\rul{Calculate}}
\newcommand{\Transfer}{\rul{Transfer}}
\newcommand{\MoveBase}{\rul{MoveBase}}
\newcommand{\MAJ}{\rul{Interval-plur}}
\newcommand{\Move}{\rul{Move}}
\newcommand{\ruSwing}{\rul{Swing}}
\newcommand{\maj}{\mathop{\mathrm{maj}}}
\newcommand{\Addr}{\fld{Addr}}
\newcommand{\cAddr}{\fld{c.Addr}}
\newcommand{\Rec}{\fld{Rec}}
\newcommand{\cRec}{\fld{c.Rec}}
\newcommand{\Info}{\fld{Info}}
\newcommand{\cInfo}{\fld{c.Info}}
\newcommand{\Drift}{\fld{Drift}}
\newcommand{\cDrift}{\fld{c.Drift}}
\newcommand{\Zigzag}{\rul{Zigzag}}
\newcommand{\ZigDepth}{\fld{ZigDepth}}
\newcommand{\ZigDir}{\fld{ZigDir}}
\newcommand{\cHold}{\fld{c.Hold}}
\newcommand{\Mode}{\fld{Mode}}
\newcommand{\Stage}{\fld{Stage}}
\newcommand{\Normal}{\mathrm{Normal}}
\newcommand{\Marking}{\mathrm{Marking}}
\newcommand{\Planning}{\mathrm{Planning}}
\newcommand{\Mopping}{\mathrm{Mopping}}
\newcommand{\Recovering}{\mathrm{Recovering}}
\newcommand{\Mark}{\rul{Mark}}
\newcommand{\Plan}{\rul{Plan}}
\newcommand{\RangeCheck}{\rul{RangeCheck}}
\newcommand{\Mop}{\rul{Mop}}
\newcommand{\State}{\fld{State}}
\newcommand{\Core}{\fld{Core}}
\newcommand{\cCore}{\fld{c.Core}}
\newcommand{\cState}{\fld{c.State}}
\newcommand{\Sweep}{\fld{Sw}}
\newcommand{\cSweep}{\fld{c.Sw}}
\newcommand{\Coordinated}{\mathrm{Coordinated}}
\newcommand{\Dir}{\fld{Dir}}
\newcommand{\dir}{\mathrm{dir}}
\newcommand{\bbZ}{\mathbb{Z}}
\newcommand{\customqed}[1]{{\renewcommand{\qedsymbol}{#1}\qed}}
\newcommand{\varqed}{\customqed{\hbox{$\lrcorner$}}}
\numberwithin{equation}{section}
\theoremstyle{plain}
\newtheorem{theorem}{Theorem}[section]
\newtheorem{lemma}[theorem]{Lemma}
\theoremstyle{definition}
\newtheorem{Definition}[theorem]{Definition}
\newenvironment{definition}{\begin{Definition}}{\varqed\end{Definition}}
\newtheorem{Example}[theorem]{Example}
\newenvironment{example}{%
 \begin{Example}}{\varqed\end{Example}}
\newtheorem{Problem}[theorem]{Problem}
\theoremstyle{remark}
\newtheorem{Remark}[theorem]{Remark}
\newenvironment{remark}{\begin{Remark}}{\varqed\end{Remark}}
\newcommand{\txt}[1]{\text{\rmfamily\mdseries\upshape{#1}}}
\newcommand{\cA}{\mathcal{A}}
\newcommand{\cI}{\mathcal{I}}
\newcommand{\cS}{\mathcal{S}}
\newcommand{\K}{K}
\newcommand{\blank}{\text{\textvisiblespace}}
\newcommand{\state}{\fld{state}}
\newcommand{\start}{\fld{start}}
\newcommand{\tape}{\fld{tape}}
\newcommand{\pos}{\fld{pos}}
\newcommand{\h}{h}
\newcommand{\Last}{\mathrm{Last}}
\newcommand{\TransferLast}{\mathrm{TransferLast}}
\newcommand{\Configs}{\mathrm{Configs}}
\newcommand{\E}{E}
\newcommand{\Histories}{\mathrm{Histories}}
\newcommand{\Z}{Z}
\newcommand{\TransferStart}{\mathrm{TfSt}}
\newcommand{\TransferSw}{\mathrm{TfSw}}
\begin{document}

\title{A Turing Machine Resisting Isolated Bursts Of Faults}
\author{
Ilir \c{C}apuni  \hspace{1cm} Peter G\'{a}cs  \\
Boston University \\
Department Of Computer Science \\
111 Cummington str, Boston MA \\
\{ilir, gacs\}@cs.bu.edu
}

\maketitle

\begin{abstract}
We consider computations of a Turing machine under
noise that causes  consecutive violations
of the machine's transition function.
Given a constant upper bound $\beta$ on the size of bursts of faults,
we construct a Turing machine $M(\beta)$ subject to faults
that can simulate any fault-free
machine under the condition that bursts are not closer to each other than
$V$ for an appropriate $V = O(\beta^2)$. 
\end{abstract}

\section{Introduction}

\subsection{The  problem}
Little is known about the behavior and the power of Turing
machines when their operation is subjected to noise
that can change arbitrarily the state and the content of
the cell where the head is positioned.
The main open question, under every noise model,
is whether a machine subject to it can
perform arbitrary computations reliably.

Here, we construct a Turing machine that---with a
slowdown by a multiplicative constant---can simulate any other Turing
machine even if the simulator is subjected to constant size bursts
of faults separated by a certain minimum number
of steps from each other.

The problem of constructing fault-proof machines
from components that can fail was first considered
by von Neumann in \cite{vonNeumann},
who addressed the problem in the Boolean circuits model.
New advances along this path were made in~\cite{NC,DSpielman}.
The question has been considered in uniform models
of computation as well.
A simple rule for two-dimensional
cellular automata that keeps one bit forever even though
each cell can fail with some small probability
was given in~\cite{Toom}.
A 3-dimensional reliably computing cellular automaton using Toom's rule
was constructed in~\cite{PGReif}.
Alas, all simple one-dimensional cellular automata
appear to be ``ergodic'' (forgetting everything about their initial
configuration in time independent of the size).
The first, complex, nonergodic cellular automaton
was constructed in~\cite{PG:1986}, and improved upon
in~\cite{GacsSorg97}.
It supports a hierarchical organization,
based on an idea given in~\cite{Kurd}.
Cells are organized in
units that perform fault-tolerant simulation of another
automaton (of the same kind).
The latter simulates even more reliably a third
automaton of a similar kind, and so on.

The question of reliable computation with
Turing machines (where arbitrarily large
bursts may occur with correspondingly small probability) is
raised in~\cite{GacsSorg97}.
As in the case of one-dimensional cellular automata,
no simple solution to this problem appears to exist.
The present paper's machine is intended as a building block
towards the eventual (hierarchical) construction of such a
machine.
This follows the paradigm of the proof in~\cite{PG:1986}, where
each member of the hierarchy of simulations
is a similar building block, coping with distant bursts.
To the best of our knowledge, this is the first
construction of a sequential machine, reliable in a similar sense.

The title of~\cite{asarin} suggests some connection, but that
paper's interest is completely different: it examines the expressional ability,
in terms of the arithmetical hierarchy, of Turing
machines whose storage tapes are exposed
to stochastic noise that tends to zero.

\subsection{Simulating cellular automata}

It is natural to try to derive fault-tolerant Turing
machines from the existing results on fault-tolerant cellular automata.
Cannot one simply define a Turing machine that simulates a fault-tolerant
cellular automaton?
In some sense, the answer is yes.
Suppose that we know in advance the memory requirement
$m = S(x)$ of a computation on a
fault-tolerant cellular automaton $M$, on input $x$.
The we can define a special Turing machine $T(m)$, working on a \df{circular tape} of
size $m$, with the head moving always in the right direction (in other words,
the ``oblivious'' property is hardwired), where each pass of $T$ over the tape
simulates one step of $M$.
This machine will clearly have the same fault-tolerance
properties that $M$ has.

The circular Turing machine has a
strong fault-tolerant behavior (with a sophisticated transition rule, coming from the
cellular automaton it simulates).
Our efforts on fault-tolerant Turing machines
can be seen as just aiming to
remove the limitation of circular tape
(input size-dependent hardware).

In view of the above, it would be sufficient to
define fault-tolerant sweeping
behavior on a regular tape
(once the head can change direction, the sweeping
movement can be disturbed by faults):
the rest can be done by simulating a cellular automaton.
We were, however, not able to do this without
recreating the hierarchical
constructions used in cellular
automata---with all the necessary changes for
Turing machines.

\subsection{Turing machines}

Our contribution uses one of the standard definitions of a Turing
machine, with the exception of no halting state.

\begin{definition}
    A Turing machine $M$ is defined by a tuple
        \begin{align*}
             (\Gamma,\Sigma,\delta,q_{\start},F).
        \end{align*}
    Here, $\Gamma$ is a finite set of \df{states}, $\Sigma$ is a finite alphabet used
    in cells of the tape, and
        \begin{align*}
             \delta: \Sigma\times \Gamma\to \Sigma\times\Gamma\times\{-1,0,+1\}
        \end{align*}
    is a transition function.
    The tape alphabet $\Sigma$ contains at least the distinguished
    symbols $\blank,0,1$ where $\blank$ is called the \df{blank symbol}.
    The distinguished state $q_{\start}$ is called the \df{starting state}.
    The set $F$ of \df{final states} has the property that whenever $M$ enters a
    state in $F$, it can only continue from there to another state in $F$, without
    changing the tape.

    A \df{configuration} is a tuple
        \begin{align*}
             (q,\h,x),
        \end{align*}
    where $q\in\Gamma$, $\h\in\bbZ$ and $x\in\Sigma^{\bbZ}$.
    Here, $x[p]$ is the content of the tape cell at position $p$.
    The tape is blank at all but finitely many positions.
    The work of the machine can be described as a sequence of configurations
    $C_{0},C_{1},C_{2},\dots$, where $C_{t}$ is the configuration at time $t$.
    If $C=(q,\h,x)$ is a configuration then we will write
        \begin{align*}
             C.\state=q,\quad C.\pos=\h, \quad C.\tape=x.
        \end{align*}
    Here, $x$ is also called the \df{tape configuration}.
    \end{definition}

Though the tape alphabet may contain
non-binary symbols, we will restrict input and output to binary.

    \begin{definition}
For an arbitrary binary string $x$, let
        \begin{align}\label{eq:M()}
            M(x,t)
        \end{align}
    denote the configuration at time $t$, when started from a binary input string $x$
    written on the tape starting from position 0, with head position 0 and the
    starting state.
    Thus, the symbol at tape position $p$ at time $t$ can be written
        \begin{align*}
             M(x,t).\tape[p].
        \end{align*}
    The transition function $\delta$ tells us how to compute the next
    configuration from the present one.
    When the machine is in a state $q$, at tape position $\h$, and
    observes tape cell with content $a$, then denoting
         \begin{align*}
           (a',q',j)=\delta(a,q),
         \end{align*}
    it will change the state to $q'$, change the
    tape cell content to $a'$ and move to tape position to $\h+j$.
    For $q\in F$ we have $a'=a$, $q'\in F$.
\end{definition}

We say that a \df{fault} occurs at time $t$ if the output $(a',q',j)$ of the
transition function at this time is replaced with some other value
(which is then used to compute the next configuration).
For the sake of a clean definition of simulations, we will be more formal in
defining fault-free histories.

\begin{definition}[Trajectory]\label{def:configs,histories}
    Let
        \begin{align*}
             \Configs_{M}
        \end{align*}
    denote the set of all possible configurations of a Turing machine $M$.
    Consider a sequence $\eta=(\eta(0),\eta(1),\dots)$ of configurations of
    $M=(\Gamma,\Sigma,\delta)$ with $\eta(t)=(q(t),\h(t),x(t))$.
    This sequence will be called a \df{history} of $M$
    if  the following conditions hold:
    \begin{bullets}
        \item $q(0)=q_{\start}$.
        \item $x(t+1)[n]=x(t)[n]$ for all $n\ne \h(t)$.
        \item $\h(t+1)-\h(t)\in\{-1,0,1\}$.
    \end{bullets}
    Let
        \begin{align*}
            \Histories_{M}
        \end{align*}
    denote the set of all possible histories of $M$.
    A history $\eta$ with $\eta(t)=(q(t),\h(t),x(t))$ of $M$ is called a
    \df{trajectory} of $M$  if for all $t$ we have
        \begin{equation}\label{eq:traj}
             (x(t+1)[\h(t)], q(t+1), h(t+1)-h(t)) = \delta(x(t),q(t)).
        \end{equation}
    We say that a history has a \df{fault} at time $t$ if~\eqref{eq:traj} is violated
    at time $t$.
(Thus, if a history has any one fault, it is not a trajectory.)
A \df{burst of faults} of a history is a sequence of times containing some faults.
\end{definition}

With the earlier notation~\eqref{eq:M()}, if $x\in\Sigma^{*}$ is a string of nonblank tape
symbols, then the history defined by
\begin{align*}
 \eta(t)=M(x,t)
\end{align*}
for all $t$ is a trajectory in which $\eta(0)$ is a starting tape configuration
obtained by surrounding $x$ with blanks.

\subsection{Codes}

To define simulation of a noise-free machine $M_{2}$ by a noisy machine $M_{1}$,
we need to specify the correspondence between
configurations of these two machines.
After a burst, the state of the machine---as well as the state of cells
where the head was during the burst, could have been changed in an arbitrary way.
To proceed with the simulation, the simulating machine must
recover the information lost.
Redundant storage will help.
In Section~\ref{sec:simulation}, we will specify how one step of
$M_{2}$ is simulated by a bounded number of steps of $M_{1}$.

We formalize redundant storage with the help of codes.

\begin{definition}[Code]\label{def:codes}
    Let $\Sigma_{1},\Sigma_{2}$ be two finite alphabets.
    A \df{block code} is given by a positive integer $Q$, an \df{encoding
      function}  $\varphi_{*} :\Sigma_{2}\to\Sigma_{1}^{Q}$ and a \df{decoding
      function} $\varphi^{*}:\Sigma_{1}^{Q}\to\Sigma_{2}$
 with the property $\varphi^{*}(\varphi_{*}(x))=x$.
 \end{definition}

 \begin{definition}[Standard pairing]
For every alphabet $\Sigma$ that we will consider, we assume that there is a
standard ordering of its elements: $\Sigma=\{s_{1},\dots,s_{n}\}$.
This gives rise to a code
 \begin{align*}
(\gamma_{*},\gamma^{*}),
 \end{align*}
where $\gamma_{*}(s_{i})$ is the base 2 notation of the number $i$, padded
from the front to length $\cei{\log n}$.
For example, if $\Sigma=\{s_{1},s_{2},s_{3}\}$
then the codewords are $01,10,11$.

For a (possibly empty) binary string $x=x(1)\dotsm x(n)$ let us introduce the map
 \[
   x^{o} = x(1)x(1)x(2)x(2)\dotsm x(n)x(n).
 \]
If $s$ is a symbol in some alphabet $\Sigma$ then by $\ang{s}$ we will understand
$(\gamma_{*}(s))^{o}$, and call it the \df{standard prefix-free code} of $s$.
Similarly,
 \begin{align*}
        \ang{s,t} &=\Paren{(\gamma_{*}(s))^{o}\gamma_{*}(t)}^{o},
\\ \ang{s,t,u} &= \ang{s,\ang{t,u}},
 \end{align*}
and so on.
\end{definition}

We have $|x^{o}|=2|x|+2$.
There are shorter codes with the same prefix-free property,but minimizing the
code length is not our concern here.

 \begin{definition}[Error-correcting code]\label{def:ercode}
A block code is $(\beta,t)$-\df{burst-error-correcting}, if for all
$x\in\Sigma_{2}$, $y\in\Sigma_{1}^{Q}$ we have $\varphi^{*}(y)=x$
whenever $y$ differs from $\varphi_{*}(x)$ in at most $t$
intervals of size $\le\beta$.
 \end{definition}

 \begin{example}[Tripling]\label{xmp:tripling}
   Suppose that $Q\ge 3\beta$ is divisible by 3,
$\Sigma_{2}=\Sigma_{1}^{Q/3}$, $\varphi_{*}(x)=xxx$.
Let $\varphi^{*}(y)$ be obtained as follows.
If $y=y(1)\dots y(Q)$, then $x=\varphi^{*}(y)$ is defined as follows:
$x(i)=\maj(y(i),y(i+Q/3),y+2Q/3)$.
For all $\beta\le Q/3$, this is a 
$(\beta,1)$-burst-error-correcting code.

If we repeat 5 times instead of 3, we get a $(\beta,2)$-burst-error-correcting
code (there are also much more efficient such codes than just repetition).
 \end{example}

We will also need a more general majority function later on:

    \begin{definition}\label{def:majority}
        Let $x=(x_{1},\dots,x_{n})$ be a sequence of symbols
        from a finite alphabet $\Sigma=\{a_{1},a_{2},\dots,a_{m}\}$.
        For each $j=1,2,\dots,m$, let  $k_{j}$ be the number of
        occurrences of $a_{j}$ in $x$, $k_{1} + k_{2} + \dots + k_{m} = n$.
        Then,
            \begin{align*}
                 \maj(x_{1},x_{2},\dots,x_{n}) = a_{k},
            \end{align*}
        where  $k = \arg\max_{j}{k_{j}}$.
    \end{definition}

    In the Section~\ref{sec:majority}, we will show how to compute
    a majority of values in a sequence of cells using only one
    pass over the sequence.

 \begin{definition}[Configuration code]
      A \df{configuration code}
is a pair of functions
    \begin{align*}
        \varphi_{*} :\Sigma_{2}^{\bbZ}\to\Sigma_{1}^{\bbZ},
        \quad
        \varphi^{*}:\Sigma_{1}^{\bbZ}\to\Sigma_{2}^{\bbZ}
    \end{align*}
    that encodes infinite strings of $\Sigma_{2}$ into infinite strings of $\Sigma_{1}$.
    Each block code $(\varphi_{*},\varphi^{*})$
    gives rise to a natural tape configuration code which we will also denote by
    $(\varphi_{*},\varphi^{*})$.
    If $\xi=\dotsm \xi(-1)\xi(0)\xi(1)\dotsm$ is an infinite string of letters of $\Sigma_{2}$
    then $\varphi_{*}(x)$ is the string
    \begin{align*}
        \dotsm\varphi_{*}(\xi(-1))\varphi_{*}(\xi(0))\varphi_{*}(\xi(1))\dotsm,
    \end{align*}
    while for decoding an infinite configuration $\xi'$ we subdivide it first into
    blocks of size $Q$ (starting with $\xi'(0)\dotsm\xi'(Q-1)$), decode each block
    separately, and concatenate the results.
\end{definition}

\subsection{The result}

We will define our result in terms of universal Turing machines, operating on
binary strings as inputs and outputs.

\begin{definition}[Computation result]
 Assume that a Turing machine $M$ starting on binary $x$,
 at some time $t$
 arrives at the first time at some final state.
 Then we look at the longest (possibly empty)
 binary string to be found starting at position
 0 on the tape, and call it the \df{computation result} $M(x)$.
\end{definition}

\begin{definition}[Universal Turing machine]\label{def:univ-TM}
We say that Turing machine $U$ is \df{universal} among Turing machines with
binary inputs and outputs, if for every Turing machine $M$,
for all binary strings $x$, there is a binary string $p_{M}$ such that $M$ reaches a
final state on input $x$ if and only if $U$ reaches a final state on input $\ang{p_{M}}x$,
further in this case we have $U(\ang{p_{M}}x)=M(x)$.
A universal Turing machine will be called \df{flexible} if whenever
$U(\ang{p,q}x)$ halts, also $U(\ang{p}q)$ halts, and
 $U(\ang{p,q}x)=U(U(\ang{p}q)x)$.
In other words  if a program has the form $\ang{p,q}$ then $U$ first applies
as a preprocessing step the program $p$ to $q$, and then it starts work on the
result attached in front of $x$.

It is well-known that there are flexible universal Turing machines.
Let us fix one and call it $U$.

Consider an arbitrary Turing machine $M$ with state set $\Gamma$, alphabet
$\Sigma$, and transition function $\delta$.
A binary string $p$ will be called a \df{transition program} of $M$ if
whenever $\delta(a,q)=(a',q',j)$ we have
 \begin{align*}
 U(\ang{p}\ang{a,q})=\ang{a',q',j}.
 \end{align*}
We will also require that the computation induced by the program makes
$O(|p|+|a|+|q|)$ left-right turns, over a length tape $O(|p|+|a|+|q|)$.
\end{definition}

The transition program just provides a way to compute
the (local) transition function of $M$ by the universal machine,
it does not organize the rest of the simulation.

\begin{remark}
 In the construction provided by the textbooks,
 the program is generally a string
encoding a table for the transition
function $\delta$ of the simulated machine
$M$.
Other types of program are imaginable:
some simple transition functions can
have much simpler programs.
However, our fixed machine is good enough.
Let the fixed program $r$ be such that
$U(\ang{r}\ang{x,y})=\ang{x}\ang{y}$.
If some machine $U'$ simulates $M$ via a
very simple program $q$, then
$U$ will simulate $M$ via program $\ang{r,\ang{p_{U'},q}}$:
 \begin{align*}
     U(\ang{r,\ang{p_{U'},q}}x) =
     U(\ang{p_{U'}}\ang{q}x)=
     U'(\ang{q}x) = M(x).
 \end{align*}
\end{remark}

For simplicity,
we will consider only computations whose result
is a single symbol, at tape position 0:
\begin{align*}
 M(x,t).\tape[0]
\end{align*}
at any time $t$ in which $M(x,t).\state$ is a final state.
This frees us of the problem of having to
decode before announcing the final
result of the fault-tolerant computation.
We will prove:

\begin{theorem}[Main]\label{thm:main}
For a given Turing machine $M_{2}$ with transition program $p_{2}$,
and positive integer $\beta$, following items can be constructed:
\begin{bullets}
    \item Integers $Q$ depending linearly on $\beta$ and $p_{2}$,
        logarithmically on $|\Sigma_{2}|$, $|\Gamma_{2}|$, further
        $V$ depending quadratically on $Q$;
    \item A block code $(\varphi_{*},\varphi^{*})$ of blocksize $Q$;
    \item A machine $M_{1}$
        whose number of states and alphabet
        size depend polynomially on $Q$,
        with some function $f$ defined on its alphabet;
\end{bullets}
such that the following holds.

Suppose that on input $x$, the fault-free machine $M_{2}$
enters a final state at time $T$.
Assume that $\eta$ is a history of machine $M_{1}$ on starting
configuration $\varphi_{*}(x)$
such that bursts of faults have size at most
$\beta$, and are separated
by at most $V$ steps from each other.
Let $t$ be any time $\ge V T$ such that no fault occurred in
the last $V$ steps before and including $t$, then
\begin{align}\label{eq:main-thm}
 f(\eta(t).\tape[0])= M_{2}(x,T).\tape[0].
\end{align}
\end{theorem}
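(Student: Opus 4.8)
The plan is to build $M_1$ as a machine that stores an encoding of the entire configuration of $M_2$ on its tape — the tape contents via the block code $(\varphi_*,\varphi^*)$, and the head position and state of $M_2$ inside the block currently being worked on — and that cycles through a fixed sequence of \emph{stages} (Locating, Marking, Planning, Committing, Mopping, as the macros in the preamble hint), each stage being implemented as a redundant sweep over a bounded neighborhood of the active block. One full cycle simulates one step of $M_2$, and the whole simulation of $T$ steps of $M_2$ then takes $O(QT)$ steps of $M_1$; with $V=O(Q^2)$ this is comfortably below $VT$, so the hypothesis $t\ge VT$ guarantees $M_2$ has already reached its final state well before time $t$. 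The decoding function $f$ on $M_1$'s alphabet simply reads off, from the redundant local encoding in the block at position $0$, the symbol $M_2$ would have there.

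The argument then splits into two regimes. \textbf{Fault-free correctness:} if no fault ever occurred, each stage-sweep does exactly what it is programmed to do — using the transition program $p_2$ inside the Planning stage to compute $\delta_2$, which by Definition~\ref{def:univ-TM} costs only $O(|p_2|+\log|\Sigma_2|+\log|\Gamma_2|)=O(Q)$ turns over $O(Q)$ tape — so $M_1$ faithfully produces $\varphi_*$ of the next $M_2$-configuration, and \eqref{eq:main-thm} is immediate by induction on $T$. \textbf{Fault-tolerance:} the core of the proof. One models the effect of a single burst: it corrupts at most $\beta$ consecutive cells plus the state, i.e.\ (since blocks have size $Q\gg\beta$) it damages at most a bounded number of consecutive blocks and leaves the machine in an arbitrary state somewhere on the tape. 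I would prove a \emph{recovery invariant}: after a burst, within $O(Q^2)$ steps — and in particular before the next burst, since bursts are $V=O(Q^2)$ apart — the machine detects the inconsistency (the redundancy in the block code, together with bookkeeping fields like $\StepCount$, $\Stage$, $\Mode$, lets it notice that its local picture is not a legal mid-stage snapshot), re-localizes the active block, re-reads the surrounding blocks (which, being $(\beta,1)$- or $(\beta,2)$-burst-error-correcting à la Example~\ref{xmp:tripling}, still decode correctly since only one burst touched them), and restarts the current stage from a clean state, thereby restoring a legal configuration encoding the \emph{same} $M_2$-configuration the simulation was on. Since each burst causes at most a bounded delay and no loss of simulated progress, the induction goes through: at time $t$, with no fault in the last $V$ steps, $M_1$ is in a clean post-recovery state correctly encoding $M_2(x,T)$, so $f(\eta(t).\tape[0])=M_2(x,T).\tape[0]$.

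The genuinely hard part is the recovery mechanism and its analysis — essentially everything the macro list is foreshadowing (\Zigzag, \Recover, \Drift, \Sweep, $\NonRollbacks$, etc.). A burst can strike \emph{during} a recovery, or during the delicate Committing stage where the block's contents are being overwritten, so the danger is a cascade: a burst creates a corrupted region, the machine's attempt to repair it wanders, a second burst hits the confused machine, and progress is lost or — worse — a wrong value is committed. Ruling this out requires (i) making every stage \emph{interruptible and restartable} so that an aborted stage never leaves partial damage that a fresh start cannot detect and undo; (ii) a drift/marking discipline so the head can always find its way back to the active block from wherever a burst left it, within $O(Q^2)$ steps — this is exactly where the quadratic $V=O(Q^2)$ comes from, the head may need to sweep a $\Theta(Q)$-neighborhood $\Theta(Q)$ times; and (iii) a potential-function or timestamp argument ($\StepCount$, $\NonRollbacks$) showing the number of steps between a burst and full recovery is bounded by $V$ regardless of the adversarial state the burst produced. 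I would structure this as a sequence of lemmas: a single-burst recovery lemma (clean state $\to$ burst $\to$ clean state within $V$, same simulated content), then the main induction combining fault-free stage correctness with single-burst recovery, using the $V$-separation hypothesis to ensure recoveries never overlap.
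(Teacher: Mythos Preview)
Your high-level architecture --- block encoding, one work period per simulated step, a recovery mechanism invoked after bursts --- matches the paper's, but two of your load-bearing mechanisms are wrong in ways that would derail the proof.

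First, a calibration issue. The stage names you read off the preamble (\Marking, \Planning, \Mopping) are stages of the \emph{recovery} procedure, not of the main simulation cycle; the simulation itself is a Computation phase (decode/apply $\delta_2$/encode, repeated three times with cellwise majority) followed by a Transfer phase copying the state track to the neighbor colony. Recovery is \emph{local}: it works on an interval of size $O(\beta)$ and finishes in $O(\beta)$ steps, not $O(Q^2)$. The bound $V=O(Q^2)$ is the length of one ordinary work period (the Transfer phase needs $\Theta(Q)$ sweeps of length $\Theta(Q)$), not the cost of recovery; the separation hypothesis is there so that at most one burst hits any given work period, not so that recovery has time to finish.

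Second, and this is the genuine gap: your proposed invariant ``clean state $\to$ burst $\to$ clean state within $V$, same simulated content'' is too strong to be provable, and your detection mechanism (``the block code's redundancy lets the machine notice its local picture is not a legal mid-stage snapshot'') is not what the construction actually relies on. A burst can deposit a $\beta$-sized \emph{island} of damage that the head walks away from without noticing at all --- the paper exhibits exactly this scenario --- and that island may persist for arbitrarily many work periods until the simulation happens to transfer back into that colony. So the paper does \emph{not} restore a fully clean configuration after each burst. It maintains a weaker invariant, an \emph{annotated} configuration: a healthy configuration overlaid with a bounded number of $\beta$-sized islands/stains per colony and a small distress area around the head. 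Detection is by \emph{zigging}: every $\Theta(\beta)$ forward steps the head backtracks $\Theta(\beta)$ steps checking coordination with cells it just wrote, so a burst that alters the sweep direction or address is caught within $O(\beta)$ steps, long before any block-level decoding. The hard lemmas (Correction, Undisturbed Alarm, Burst, Disturbed Recovery) then show that every distress episode ends in $O(\beta)$ steps leaving at most one fresh island, and --- crucially --- that the islands and stains that can accumulate in any colony are positioned so that the $(\beta,2)$-burst-error-correcting code placed $1.1E$ away from colony boundaries still decodes correctly at the next Computation phase. Your plan, by aiming for full cleanup, would founder on the persistent-island scenario; and without zigging you have no mechanism that catches a burst which flips the sweep direction mid-colony.
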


Section~\ref{sec:structure} specifies the layout of
the tape and the structure of
the states, and introduces the notion of rules.
The parts of the transition function of $M_{1}$
dealing with redundant simulation are
defined in Section~\ref{sec:simulation}.
Section~\ref{sec:faults} introduces the parts
allowing to restore the structure of the simulation
after locally garbled by faults.
The main theorem is proved in Section~\ref{sec:proof}.

\section{Program Structure}\label{sec:structure}

\subsection{Fields}

Each state of the simulating machine $M_{1}$ will
be a tuple $q=(q_{1},q_{2},\dots,q_{k})$,
where the individual elements of the tuple will be called \df{fields}, and will
have symbolic names.
For example, we will have fields $\Info$ and $\Drift$,
and may write $q_{1}$ as $q.\Info$ or just $\Info$, $q_{2}$ as $q.\Drift$ or $\Drift$, and so on.

We will call the current direction
of the simulated machine $M_{2}$ the \df{drift} ($-1$ for left, 0 for none, and $+1$ for right).

A properly formatted configuration of $M_{1}$ splits the tape into blocks of $Q$
consecutive cells called \df{colonies}.
One colony of the tape of the simulating
machine represents one cell of the simulated machine.
The colony that
corresponds to the active cell of the simulated machine (that is the cell that the
simulated machine is scanning) is called the \df{base colony}
(later we will give a precise definition of this notion based on the actual history of
the work of $M_{1}$).
Once the drift is known, the union of the base colony with the neighbor colony in
the direction of the drift is called the \df{extended base colony}
(more precisely, see Definition~\ref{def:healthy}).

The head will make some global sweeping movements over the extended base
colony.
We will use term \df{sweep direction}.
But even while the sweep direction does not change,
the head will make frequent short switchbacks (zigzags).

The states of machine $M_{1}$ will have a field called \df{mode}.
The normal mode corresponds to the states
where $M_{1}$ is performing the simulation of $M_{2}$.
The recovery mode tries to correct some perceived fault.

Similarly, each cell of the tape of $M_{1}$ consists of several fields.
Some of these have names identical to fields of the state.
In describing the transition rule of $M_{1}$ we will write, for example,
$q.\Info$ simply as $\Info$, and for the corresponding field $a.\Info$ of the
observed cell symbol $a$ we will write $\cInfo$.
The array of values of the same field of the cells will be called a \df{track}.
Thus, we will talk about the $\cHold$ track of the tape, corresponding to the
$\cHold$ field of cells.

\subsection{Rules}

Instead of writing a single huge transition table, we present the transition
function as a set of \df{rules}.
Each rule consists of some conditional statements, similar to the ones seen in
an ordinary program:
 ``{\bf if }{\em condition} {\bf then}$\dots$'', where the condition
is testing values of some fields of the state and the observed cell.
Even though rules are written like procedures of a program,
they describe a single transition.
When several consecutive statements are given, then they
(almost always) change different fields of the state or
cell symbol, so they can be executed simultaneously.
Otherwise and in general, even if a field is updated in
some previous statement, in all following statements that use
this field, its old value is considered.

Rules can call other rules, but these calls will never form a cycle.
We will also use some conventions introduced by the C language:
namely,
$x\gets x+1$ and $x\gets x-1$ are abbreviated to $\increment{x}$ and
$\decrement{x}$ respectively.

Rules can also have parameters, like $\ruSwing(a,b,u,v)$ (see below).
Since each rule is called only a constant number of times in the whole program,
the parametrized rule can be simply seen as a shorthand.

\subsection{List of fields}
The basic fields of the state and of cells are listed below, with some hints of
their function (this does not replace our later definition of the transition function).
We will not repeat every time that each field of a cell has also a possible value
$\emptyset$ corresponding to the case when the state is blank.

\emph{We recommend to skip this list at first reading, and to return to it just for reference.}

\begin{enumerate}[1.]

\item $\Addr$ ranges from $-Q$ to $2Q-1$.
The values $-Q$ to $-1$ are taken during the left drift, while the values $Q$ to
$2Q-1$ during a right drift.
During the first sweep of the work period, $\Addr$ is reduced modulo $Q$.

\item $\Dir$ stores the direction of the previous step.

\item $\Drift$ stores the direction of the simulated machine $M_{2}$.
It may have values $\emptyset,-1,0,1$.
The value $\emptyset$ corresponds to the case when the new $\Drift$ is still not
computed,
and will also be the default value (for example in empty cells).
The $\cDrift$ field of the cells of the extended colony
correspond to $\Drift$ in the state.

\item $\Sweep$ numbers the sweeps through the colony.
The first sweep of a work period has number 1 and is to the right,
and this way each right sweep is odd, each
left sweep is even.
Thus the sweep direction of the head is completely determined by
the parity of $\Sweep$, unless the head is at a ``turning'' point.
At turning points, $\Sweep$ is incremented.

Field $\cSweep$ holds the number of the most recent sweep.
The simulation consists of a \df{computation phase} and a \df{transferring phase},
each corresponding to a certain
interval of sweep values to be specified below (these intervals depend
somewhat on the value $\Drift$).

\item The triple of fields $(\cAddr, \cSweep,\cDrift)$ will
determine the role played by a cell in the colony work period: for
notational convenience, we introduce the names
 \begin{align}\label{eq:Core}
   \Core = (\Addr,\Sweep,\Drift),
\quad  \cCore = (\cAddr,\cSweep,\cDrift).
 \end{align}

\item
 The $\Info$ and  $\State$ tracks represent the tape symbols and the state
of the simulated machine $M_{2}$.

\item\label{i:fields.zigging}
 The sweep-through is interrupted by switchbacks called \df{zigging},
described by a rule $\Zigzag(d)$.
Here $d$ is the direction of sweep.
The process also depends on a fixed parameter
 \begin{align}\label{eq:Z}
  \Z = 22 \beta, 
 \end{align}
and is controlled by the fields $\ZigDepth$ and $\ZigDir$ of the state.

From now on, we will assume that $Q$ is a multiple of $\Z-4\beta$:
 \begin{align}\label{eq:zig|Q}
   (\Z-4\beta)|Q.
 \end{align}
Every $\Z-4\beta$ forward steps are accompanied by $\Z$ steps
backward and forward, for a total of
 \begin{align*}
 3\Z-4\beta=3(\Z-4\beta)+8\beta<4(\Z-4\beta)
 \end{align*}
steps.

\begin{figure}[!ht]
\centering
\includegraphics[height=1.8in]{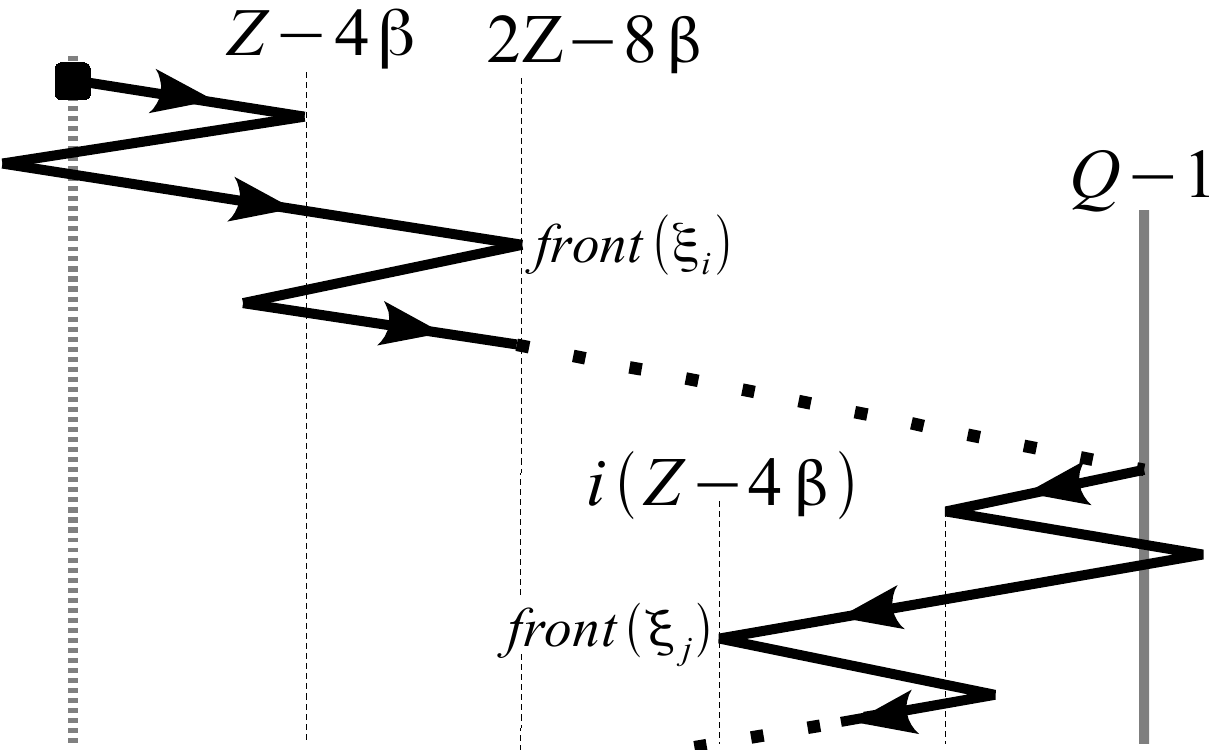}
\caption{A sweep of a base colony with zigging.
\label{fig:zigging}}
\end{figure}

\item The $\Mode$ field takes values in $\{\Normal,\Recovering\}$.
In the absence of faults, the state would never leave the normal mode.
On noticing any disorder, the state will switch to recovery mode, with the goal
of eventually returning to normal mode.

The fields used in recovery mode are all collected as subfields of the field
$\Rec$ of the state, and the field $\cRec$ of the cell state.
They will be introduced in the definition of the recovery rule.

In particular, when the field
 \begin{align}\label{eq:cRecCore}
  \cRec.\Core
 \end{align}
is not $0$, we will call the cell \df{marked}.

\item Even though we store information with redundancy, faults can disturb the
  coding and decoding operations and the simulating computation itself.
Therefore these procedures will be repeated several times, and their results,
serving as candidates of the final values of $(\Info,\Drift,\State)$, will be stored in
the $\cHold$ track.
The different candidates will be stored in the different parts of the $\cHold$
field, which is actually a small array
$\cHold[1]$, $\cHold[2]$, $\cHold[3]$.
The subfield $\cHold[i].\Info$ holds the value of the $i$-th candidate for the
new $\Info$ value of the current cell.

\end{enumerate}

Machine $M_{1}$ has no final states: even when the simulated computation ends
the simulation continues to defend the end result from faults.

\section{The Simulation}\label{sec:simulation}

One computational step of the machine $M_{2}$ is simulated by many steps of
$M_{1}$ that make one unit called the \df{work period}.

\subsection{Coding}\label{subsec:coding}

We will frequently make use of the parameter
 \begin{align}\label{eq:Expansion}
      \E = 30 \Z.
 \end{align}
For simplicity, let us assume that  the set of states $\Gamma_{2}$,
and the alphabet $\Sigma_{2}$ are subsets of the set of  binary strings
$\{0,1\}^{\ell}$ for some $\ell<Q$ (we can always ignore some states or tape
symbols, if we want).
We will then use the same code $(\upsilon_{*},\upsilon^{*})$
for both the states of machine $M_{2}$ and its alphabet.
Let $(\upsilon_{*}, \upsilon^{*})$ be a 
$(\beta,2)$-burst-error-correcting code
\begin{align*}
 \upsilon_{*}: \{0,1\}^{\ell}\to\{0,1\}^{Q-2.2\E}.
 \end{align*}
(The length of the code is not $Q$, only $Q-2.2\E$, since we will leave place the
codeword at a distance $1.1\E$ from both colony ends.)
We could use, for example, the tripling code of Example~\ref{xmp:tripling}.
Other codes are also appropriate, but we require that they have some fixed,
constant programs $p_{\encode}$, $p_{\decode}$
on the universal machine $U$, in the following sense:
 \begin{align*}
   \upsilon_{*}(x)=U(\ang{p_{\encode}}x),\quad
   \upsilon^{*}(y)=U(\ang{p_{\decode}}y).
 \end{align*}
Also, these programs must work in quadratic time and linear space on a one-tape
Turing machine (as tripling certainly does).

Let $a$ be the tape configuration of $M_{2}$ at time 0, and $s$ the starting
state of $M_{2}$.
The initial tape configuration $a'=\varphi_{*}(a)$ of $M_{1}$
is defined as follows:
\begin{align}\label{eq:init}
     a'[h \cdot Q+1.1\E,\dots,(h+1)Q-1.1\E-1].\Info   & = \upsilon_{*}(a[h]),
\\   a'[1.1\E,\dots, Q-1.1\E-1].\State  & = \upsilon_{*}(s).
\end{align}

In cells of the base colony and its left neighbor  colony,
the $\cSweep$ and $\cDrift$ fields are set
to $\Last(+1)$ and $1$ respectively, where $\Last(+1)$
denotes the last sweep of a working period for
the positive drift (and is defined below in~\eqref{eq:Last}).
In the right neighbor colony, these values are $\Last(-1)$ and $-1$
respectively.
In all other cells, these values are empty.

The head is initially located at the first cell of the base colony.
We assume that the $\Addr$
fields of each colony are filled properly, that is
\begin{align*}
   a'[i].\Addr = i \bmod Q.
\end{align*}
The $\cHold$ values are empty in each cell.

Machine $M_{1}$ starts in normal mode,
$\Drift=1$, $\Sweep=1$.
All other fields have also their initial (or empty)
values (see Figure~\ref{fig:coding}).

\begin{figure}[!ht]
\centering
\includegraphics[width=6in]{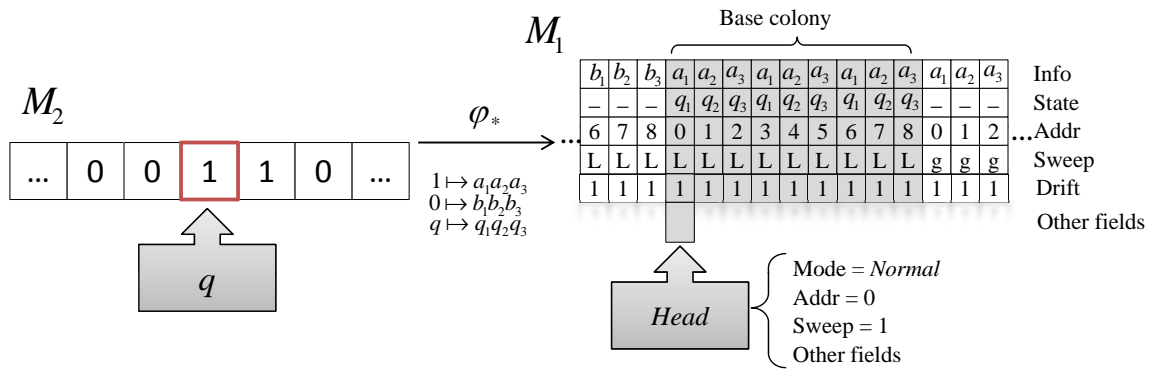}
\caption{The initial configuration of
machine $M_{2}$ is encoded into the initial configuration
of $M_{1}$, where $L=\Last(1)$, $g=\Last(-1)$. \label{fig:coding}
}
\end{figure}

The corresponding block decoding function
$\varphi^{*}$ is obtained applying the decoding
function $\upsilon^{*}$ to just the $\cInfo$ track of $M_{1}$ (actually to just
the part between addresses $1.1\E$ and $Q-1.1\E$ of each colony) to obtain the
tape of the simulated machine
$M_{2}$, and to just the  $\cState$ track of the base colony to obtain its state.

This definition of decoding will be refined for configurations different from
the initial one, since the
location of the base colony must also be found using decoding.

\subsection{Sweep counter and direction}

The global sweeping movement of the head will be controlled by the parametrized rule
\begin{align*}
\ruSwing(a,b,u,v).
\end{align*}
This rule makes the head swing between two extreme points $a,b$,
while the counter $\Sweep$ increases from value $u$ to value $v$.
The $\Sweep$ value is incremented at the ``turns'' $a,b$ (and is
also recorded on the track $\cSweep$).

The sweep direction $\delta$ of the simulating head is derived from
$\Sweep$, $\Addr$ and the current value $\Dir$ in the following way.
On arrival of the head to an endpoint (that is
when $\Dir \neq 0$ and $\Addr\in\{a,b\}$), the values
$\Sweep$ and $\cSweep$ are incremented and $\Dir$ is set to 0.
In all other cases, the sweep direction is determined
by the formula
 \begin{align}\label{eq:sweep-dir}
   \dir(s)=(-1)^{s + 1}.
 \end{align}
Let
\begin{equation}\label{eq:Dir}
\delta =
  \begin{cases}
     0 & \text{ if $\Addr\in\{a,b\}$ and $\Dir\neq 0$},  \\
       \dir(\Sweep) &\text{otherwise}.
     \end{cases}
\end{equation}
As an example of rules, we present the zigging rule in
Rule~\ref{alg:Zigzag}, which itself uses the rule $\Move(d)$.
At each non-zigging step, $\Addr \gets \Addr+\delta$.

\begin{algorule}[!ht]\caption{$\Move(d)$}\label{alg:Move}

  $\Dir\gets d$\Comment*[f]{$d\in\{-1,0,1\}$.}\;

  \lIf{$\Mode=\Normal$}{$\Addr\gets\cAddr\gets\Addr+d$}\;
  \lElse{$\Rec.\Addr\gets\Rec.\Addr+d$}\;
  Move in direction $d$.
\end{algorule}

\begin{algorule}[!ht]\caption{$\Zigzag(d)$}\label{alg:Zigzag}
  \Comment{$d\in\{-1,1\}$ is the direction of progress.}
  \If{$\ZigDir=-1$ \algAnd (($\ZigDepth=0$ \algAnd $(\Z-4\beta)|\Addr$)
    \\ \algOr $0<\ZigDepth<\Z$)}{
    $\increment{\ZigDepth}$\;
    \lIf{$\ZigDepth = \Z-1$ 
    }{$\ZigDir=1$}\;
    $\Move(-d)$
  }
  \ElseIf{$\ZigDir=1$ \algOr ($\ZigDepth=0$ \algAnd $(\Z-4\beta)\not|\Addr$)}{
    \lIf{$\ZigDepth>0$}{$\decrement{\ZigDepth}$}
    \lElse{$\ZigDir\gets -1$}\;
    $\Move(d)$
  }
\end{algorule}

\subsection{The computation phase}\label{subsec:computation-phase}

The aim of this phase is to obtain new values for $\cState$, $\cDrift$ and
$\cInfo$.
It essentially repeats three times the following \df{stages}:
\df{decoding}, \df{applying the transition}, \df{encoding}.
In more detail:
\begin{enumerate}

      \item For every $j=1,2,3$ do
       \begin{enumerate}
          \item Calling by $g$ the  string found on the $\cState$ track of
            the base colony between addresses $1.1\E$ and $Q-1.1\E$, decode it
            into string $\hat g=\upsilon^{*}(g)$
            (this should be the current state of the simulated machine $M_{2}$), and
            store it on some auxiliary track in the base colony.
            Do this by simulating the universal machine on some work track, with
            the program $p_{\decode}$: $\hat g = U(\ang{p_{\decode}}g)$.

            Proceed similarly with the string $a$ found on the $\cInfo$
            track of the base colony, to get $\hat a = \upsilon^{*}(a)$
            (this should be the observed tape symbol of the simulated machine $M_{2}$).

          \item Compute the value
  \begin{align*}
    (a',g',d)=\delta_{2}(\hat a, \hat g)
  \end{align*}
            similarly, simulating the universal machine $U$ with program $p_{2}$.
            The string $p_{2}$ (as well as the constant-size programs
            $p_{\decode},p_{\decode}$) is ``hardwired'' into the
            transition function of $M_{1}$, that is the program we are writing.
            More precisely, before performing the computation of
            $U(\ang{p_{2}}\ang{\hat a,\hat q})$ of Definition~\ref{def:univ-TM},
            machine $M_{1}$ writes the program $p_{2}$ onto some work track:
            for this, the string $p_{2}$ will be a literal part of the program of $M_{1}$.

            \item\label{i:comp.write}
              Write the encoded new state $\upsilon_{*}(g')$ onto the
              $\cHold[j].\State$ track of the base colony between positions
              $1.1\E$ and $Q-1.1\E$.

              Similarly, write the encoded new observed cell content $\upsilon_{*}(a')$ onto the
              $\cHold[j].\Info$ track of the base colony.
              Write also the first symbol of $a'$ into position 0 of the same track
              (just because the Main
              Theorem~\ref{thm:main} expects the result of the whole computation
              at tape position 0, not $1.1\E$).

              Write $d$ into the $\cHold[j].\Drift$ field of \emph{each cell} of
              the base colony.
      \end{enumerate}
      \item Repeat the following twice:
          \begin{quote}
            Sweeping through the base colony,
            at each cell compute the majority of $\cHold[j].\Info$, $j=1,2,3$,
            and write into the field $\cInfo$.
            Proceed similarly, and simultaneously, with $\State$ and $\Drift$.
          \end{quote}

It can be arranged---and we assume so---that the total number of sweeps of this
phase, and thus the starting sweep number of the next phase,
 \begin{align}\label{eq:TransferStart}
\TransferStart=O(Q),
 \end{align}
depends only on $Q$.
\end{enumerate}

\begin{figure}[!ht]
\centering
\begin{tabular}{ccc}
\includegraphics[scale=0.55]{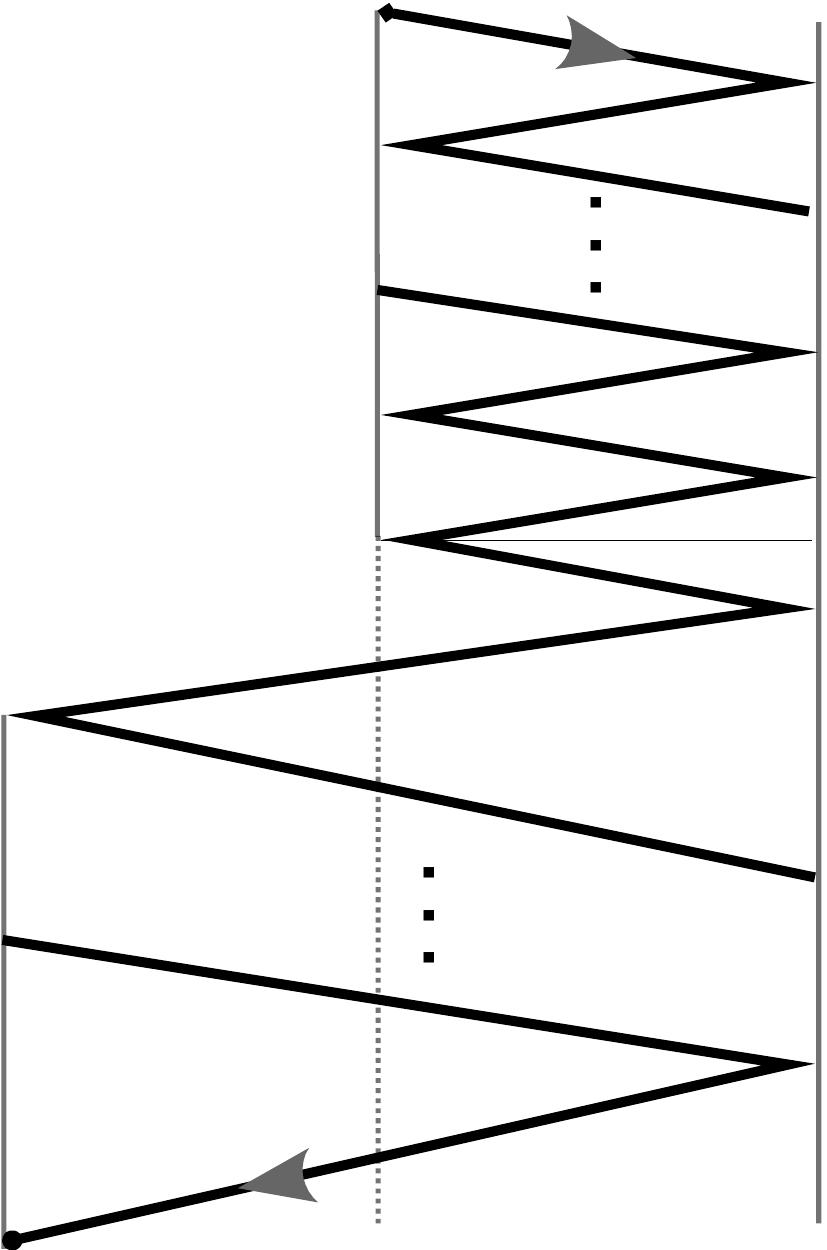} &\hspace{3em} &
\includegraphics[scale=0.55]{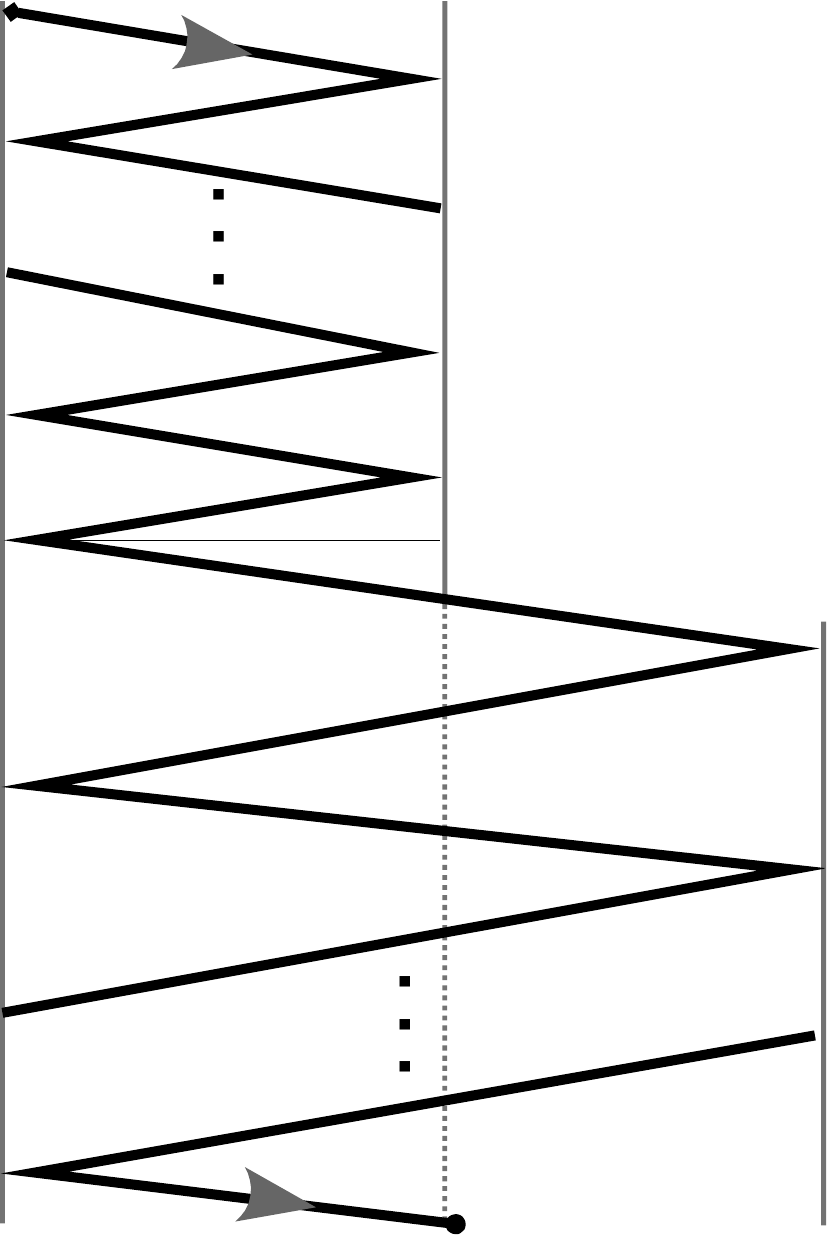} \\
(a) & & (b)
\end{tabular}
\caption{A work period of the machine $M_{1}$ (zigging and many sweeps are
not shown, for clarity).
In (a) the machine drifts left, while in (b) it drifts right.
\label{fig:sleft}}
\end{figure}

\subsection{Transfer phase}

The aim of this phase, present only if $\Drift \ne 0$,
is to transfer the new $\State$ of $M_{2}$ into the neighbor
colony in the direction of $\delta=\Drift$
(which was computed in the previous phase),
and to move there.
$\TransferSw(\delta)$ is the \df{transfer sweep}, the sweep in which we start
transferring in direction $\delta$:
\begin{equation}\label{eq:TransferSw}
                  \TransferSw(1) =\TransferStart,
\quad                 \TransferSw(-1) =\TransferStart+1.
\end{equation}
The phase consists of the following actions.
\begin{enumerate}[1.]
\item
  Spread the value $\delta$ found in the cells of the $\cDrift$ track
  (they should all be the same)
  onto the neighbor colony in direction $\delta$.

\item For $i=1,2,3$:
        \begin{quote}
          Copy the content of $\cState$ track of the base colony
            to the $\cHold[i].\State$ track of the neighbor colony.
        \end{quote}

\item Assign the field majority: $\cState\gets \maj(\cHold[1 \dots  3].\State)$
in all cells of the neighbor colony.
This part ends with a sweep value $\TransferLast$ depending only on the program
$p_{2}$.

\item If $\Drift = 1$, then move right to cell $Q$
(else stay where you are).
\end{enumerate}

The last sweep number of the work period is
\begin{align}\label{eq:Last}
  \Last(\delta) = \TransferLast + \max(0,\delta).
\end{align}

The work period in case of both non-zero drift values
is illustrated in Figure~\ref{fig:sleft}.

\subsection{Interval plurality}\label{sec:majority}

We give an algorithm that computes the \df{plurality} of some field
$\fld{c.F}$ over some
interval, that is, the value that appears the most, but at least
$1/3$ of the times.
Rule~\ref{alg:MAJ} is a version of an algorithm from \cite{misra}.
Running in a single sweep, the rule maintains a
data structure of $2$ pairs of $(v_i,c_i)$ that store
some candidate majority values and their current weight.

\begin{algorule}[htb!]\caption{$\MAJ(\fld{F},\fld{G}, n)$}\label{alg:MAJ}
\Comment{Interval ``majority''
of the field $\fld{F}$, computed and then
stored in the field $\fld{G}$ of the machine's state.
Initially $(v_i, c_i)=(\emptyset, 0)$, $i=1,2$.
}
\If{the end of the interval of length $n$ is reached}{
  $i \gets \arg \max_{j=1,2}(c_j)$\;
  $\fld{G} \gets v_i$
}
\Else{
  \If{$v_j=\fld{c.F}$ \algOr $c_{j}=0$ for some $j\in\{1,2\}$}{
    $v_j \gets \fld{c.F}$, $\increment{c_j}$
  }
  \lElse{$\decrement{c_1}$, $\decrement{c_2}$}\;
  (move right) \;
\Comment{Actually, the swing rule will move the head (with zigging).}
}
\end{algorule}

\subsection{Simulation with no faults}

The proof of Theorem~\ref{thm:main} uses a
simulation of machine $M_{2}$ by machine $M_{1}$.
Though the theorem only speaks about
the end result, for the sake of the proof
we give a formal definition of simulation.
For the moment, we concentrate on the
fault-free case.

\begin{definition}\label{def:simulation}
Let $M_{1},M_{2}$ be two machines, further let
\begin{align*}
\varphi_{*}:\Configs_{M_{2}}\to\Configs_{M_{1}}
\end{align*}
be a mapping from configurations of $M_{2}$ to those of $M_{1}$,
such that it maps starting configurations into starting configurations.
We will call such a map a \df{configuration encoding}.
Let
\begin{align*}
\Phi^{*}:\Histories_{M_{1}} \to \Histories_{M_{2}}
\end{align*}
be a mapping.
The pair $(\varphi_{*},\Phi^{*})$
of mappings is called a \df{simulation} (of $M_{2}$ by $M_{1}$)
if whenever $\xi$ is an initial
configuration of $M_{2}$ and $\eta$ is a trajectory of machine $M_{1}$
with initial configuration $\varphi_{*}(\xi)$,
the history $\Phi^{*}(\eta)$ is a trajectory of machine $M_{2}$.

We say that $M_{1}$ \df{simulates} $M_{2}$ if there is a simulation
$(\varphi_{*},\Phi^{*})$ of $M_{2}$ by $M_{1}$.
\end{definition}

We summarize the construction of the previous section
in the following statement.

\begin{lemma}\label{lem:errfreesim}
Machine $M_{1}$ simulates machine $M_{2}$.
\end{lemma}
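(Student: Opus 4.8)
The plan is to construct the history map $\Phi^{*}$ explicitly from the work-period structure described in Section~\ref{sec:simulation}, and then verify, step by step, that a trajectory of $M_{1}$ started from an encoded configuration decodes to a trajectory of $M_{2}$. First I would fix, for each configuration $\eta(t)$ of $M_{1}$ reachable on a trajectory from a starting configuration $\varphi_{*}(\xi)$, a notion of ``current work period'': since there are no faults, the head position, the $\Sweep$ counter and the $\Mode$ field (which stays $\Normal$ throughout) evolve deterministically, so the entire trajectory partitions cleanly into consecutive work periods $W_{0},W_{1},\dots$, each beginning at a sweep-$1$, $\Drift=1$ configuration with the head at the left end of a well-defined base colony. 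For a time $t$ lying in work period $W_{k}$, I would set $\Phi^{*}(\eta)(k)$ to be the $M_{2}$-configuration obtained by applying $\varphi^{*}$ at the \emph{start} of $W_{k}$: decode the $\cInfo$ track of every colony to recover the tape of $M_{2}$, decode the $\cState$ track of the base colony to recover the state, and read off the head position as the index of the base colony. (One must check this is well-defined, i.e., the decoded $M_{2}$-configuration does not change within a work period until the transfer phase completes; this is where the three-fold repetition and majority steps matter even in the fault-free case, since they are part of the honest computation.)

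The core of the argument is then a single induction on $k$: assuming $\Phi^{*}(\eta)(k) = (q,h,x)$ is the correct $M_{2}$-configuration at the start of $W_{k}$, I must show that at the start of $W_{k+1}$ the decoded configuration is exactly $\delta_{2}$ applied to it, i.e. $(q',h',x')$ with $(x'[h],q',h'-h)=\delta_{2}(x[h],q)$ and $x'[n]=x[n]$ for $n\ne h$. This breaks into three verifications following the phase structure. (1) The computation phase: since there are no faults, each of the three rounds correctly simulates $U$ on $\ang{p_{\decode}}g$, on $\ang{p_{2}}\ang{\hat a,\hat g}$, and on $\ang{p_{\encode}}(\cdot)$ — using the defining property of transition programs in Definition~\ref{def:univ-TM}, the correctness of the block code $(\upsilon_{*},\upsilon^{*})$, and the fact that $U$ is simulated faithfully on a work track — so all three candidates in $\cHold[1..3]$ agree and equal $\upsilon_{*}(g')$, $\upsilon_{*}(a')$, and $d$; the two majority sweeps then write these into $\cState$, $\cInfo$, $\cDrift$ without change. (2) The transfer phase: $\delta=d=\Drift$ is spread to the neighbor colony, the (already updated) $\cState$ is copied three times and majority-voted into the neighbor colony's $\cState$ track, and the head moves to that colony's appropriate end. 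One checks that this makes the neighbor colony the new base colony with the correct encoded $M_{2}$-state, the old base colony's $\cInfo$ track now encoding $x'[h]=a'$, and all other colonies untouched — matching $(q',h',x')$. (3) Bookkeeping: the sweep counters and $\Addr$ fields are reset so that $W_{k+1}$ again starts in the canonical form, and, by~\eqref{eq:TransferStart}, \eqref{eq:Last}, the work period has finite length depending only on $Q$ and $p_{2}$.

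The base case is the content of Section~\ref{subsec:coding}: by construction, $\varphi_{*}(\xi)$ encodes $\xi$'s tape colony-by-colony via $\upsilon_{*}$ on the $\cInfo$ track, encodes the starting state on the $\cState$ track of the base colony, sets $\Drift=1$, $\Sweep=1$, $\Mode=\Normal$, and places the head at the first cell of the base colony — which is exactly the canonical start-of-work-period form, so $\Phi^{*}(\eta)(0)=\xi$. Finally, $\Phi^{*}$ maps initial configurations to initial configurations because $\Phi^{*}(\eta)(0)$ is read off the start of $W_{0}=\varphi_{*}(\xi)$.

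The main obstacle is verification step (1): one must be genuinely careful that the rule text of Sections~\ref{subsec:computation-phase} and~\ref{subsec:coding} — the nesting of the three-fold loop, the three inner stages, the interleaving of $\Info$/$\State$/$\Drift$, the zigging movement from Rule~\ref{alg:Zigzag}, and the $\MAJ$ / field-majority computations — actually implements the claimed decode–transition–encode composition, including that the work-track simulation of $U$ has enough room (the $O(|p|+|a|+|q|)$ tape and turn bounds from Definition~\ref{def:univ-TM}, against the colony size $Q$ and the parameters $\E$, $\Z$), and that no track used as scratch space collides with the tracks carrying the simulated data. Once one accepts that the rules faithfully realize their informal descriptions, the decoding-correctness is immediate from the error-correcting-code property (used here with zero errors) and the universality of $U$; the bulk of the work is this low-level but routine rule-chasing, which in the fault-free setting never needs the recovery machinery of Section~\ref{sec:faults}.
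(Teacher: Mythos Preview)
Your proposal is correct and follows essentially the same approach as the paper: partition the trajectory into work periods via the $\Sweep$ counter, define $\Phi^{*}$ by applying the block decoding $\varphi^{*}$ at work-period boundaries, and appeal to the correctness of the computation/transfer phases as described in Section~\ref{sec:simulation}. The paper's own proof is far terser---it simply names the time $\tau_{t}$ at the end of the $t$-th work period, sets $\Phi^{*}(\eta)(t)=\varphi^{*}(\eta(\tau_{t}))$, and leaves the per-phase verification implicit in the construction---whereas you spell out the induction and the three phase-checks; but the underlying argument is the same (your ``start of $W_{k}$'' and the paper's ``end of work period $t$'' differ only by an index shift).
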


\begin{proof}
Since there are no faults interfering
with the operation of $M_{1}$,
the history of $M_{1}$ is a trajectory,
easy to break up into \df{work periods}:
intervals in which the counter $\Sweep$ is growing.
Let $\tau_{t}$ be the end of the $t$-th work period.
(Though $\tau_{t}$ is roughly proportional
to $t$, we did not make it an exact
multiple.
Such a relation would be lost in the
faulty case, anyway.)

The code $(\varphi_{*},\varphi^{*})$ is
given in Section~\ref{subsec:coding}.
The history decoding function $\Phi^{*}$ for the noise-free case is
\begin{align*}
 \Phi^{*}(\eta)(t) = \varphi^{*}(\eta(\tau_{t})),
\end{align*}
where $\varphi^{*}$ is the tape configuration
decoding function obtained from the
block code $(\varphi_{*},\varphi^{*})$.

If $t$ reaches a final state of $M_{2}$, then starting from step $\tau_{t}$,
machine $M_{1}$ will not change the state represented on
the $\cInfo$ track anymore.
\end{proof}

We will define formally later in Definition~\ref{def:state-coord}
what it means for the state to be coordinated with the observed cell.
This is always the case in the noise-free simulation, so
let us display the ``main'' rule of machine $M_1$ in Rule~\ref{alg:main1}.
Recall the definition of \df{marked} in~\eqref{eq:cRecCore}.

\begin{algorule}[!ht] \caption{{\bf Main rule}\label{alg:main1}}
  \If{$\Mode=\Normal$}{
    \lIf{\algNot $\Coordinated$ \algOr the cell is marked for recovery}{$\Alarm$}\;
    \lElseIf{$1 \leq \Sweep < \TransferStart$}{ $\Comp$ 
    }\;
    \lElseIf{$\TransferStart \le \Sweep < \Last$}{ $\Transfer$ 
    }\;
   \lElseIf{$\Last \le \Sweep $}{$\MoveBase$ 
    }
  }
\end{algorule}
Here, rule $\MoveBase$ just moves the head to the new base in case it is
not there yet.

\section{Faults}\label{sec:faults}

A fault is a violation of the transition function.
A burst of faults can change the state to an
arbitrary one, and change an interval of cells
of size $\beta$ arbitrarily.
We will call such an interval an ``island of bad
cells'' here informally, later formally.

Faults cause two kinds of change.
One is that they change the information about
the represented machine $M_{2}$.
This problem will be corrected with the help of
redundancy (encoding of the information and
repetition of computation).
The second kind of change affects the very structure
of the simulation.
These changes will be detected and corrected
locally, by the recovery rule.

When a coordination check fails, we will switch
to recovery mode.
Recovery will start with trying to identify a
small interval containing the damage.
This is followed by restoring the
``structure'' (addresses, $\cSweep$ and $\cDrift$
values) in the interval.

\subsection{Integrity}
\label{ss:comp}

Let us specify the kind of structural integrity
we expect a configuration to have.
Informally, in the absence of faults,
``outer'' cells are those outside the base colony, and even
outside the area (to be called workspace) in which the
program extends it in the transfer phase.

\begin{definition}[Outer cells]\label{def:outer-cells}
    Recall the definition of the sweep value
    $ \Last(\delta) $ from~\eqref{eq:Last}.
    For $\delta \in \{ -1,1 \}$, if a cell is nonempty and
    has $0 \le \cAddr < Q$,
    $\cDrift = \delta$,
    $\cSweep = \Last(\delta)$
    then it will be called a \df{right outer cell} if
    $\delta = -1$.
    It is a \df{left outer cell} if $\delta = 1$.
    If it is empty then it will be considered both a left outer cell and a right
    outer cell.
\end{definition}

\begin{definition}[Healthy configuration, base colony,
    extended base colony, workspace]
\label{def:healthy}
 A configuration $\xi$ is \df{healthy} if the mode is
 normal, further the following holds.

Let $d$ denote the direction of sweep, as determined by~\eqref{eq:Dir}.
Recall that $\xi.\pos$ is the head position.
We define the position $\f = \front(\xi)$,  called the \df{front}, by
     \begin{align*}
         \front(\xi) = \xi.\pos + \ZigDepth\cdot d.
     \end{align*}
This is the farthest position to which the head has
advanced before starting a new backward zig.

Let $\delta=\Drift$.
Recall the definition of the \df{transfer sweep}
$\TransferSw(\delta)$ in~\eqref{eq:TransferSw},
if $\delta \ne 0$.
There is no transfer sweep if $\delta = 0$.
We require:
    \begin{flushdescription}
    \item[Colonies]
        The non-blank cells of the tape form a single segment,
        subdivided into colonies, starting from the \df{base}
        defined by counting back from $\Addr$ (this is not
        necessarily the origin of the tape).
        The leftmost colony and rightmost colony may be only
        partially filled.

        The colony starting at the base is called the
        \df{base colony}.
        There is also an \df{extended base colony} $X$:
        this is obtained by extending the base colony in the direction $\delta$,
        provided $\Sweep \ge \TransferStart$
        (defined in~\eqref{eq:TransferStart}).

        The front $\front(\xi)$ is always in the extended base colony.
        The drift of nonempty outer cells points towards
        the base colony.

    \item[Workspace]

        The non-outer cells form a single interval
        called \df{workspace}, with the following properties:

        \begin{bullets}
            \item For $\Sweep < \TransferSw(\delta)$,
                  it is equal to the base colony.
            \item In case of $\Sweep = \TransferSw(\delta)$,
                  it is the smallest interval including
                  the base colony and the cell adjacent to
                  $\front(\xi)$ on the side of the base colony.
            \item If $\TransferSw(\delta) < \Sweep < \Last(\delta)$,
                  then it is the extended base colony.
            \item When $\Sweep = \Last(\delta)$, it is
                  the smallest interval including the future
                  base colony and $\front(\xi)$ (it is shrinking onto the future
                  base colony).
         \end{bullets}
            The field $\cAddr$ varies continuously over
                  the workspace in all these cases, except
                  possibly $\Sweep=1$.

        \item[Sweep] For $1 \le \cSweep \le \Last(\delta)$,
            we have $\cSweep(x) = \Sweep$
            in all cells $x$ behind $\front(\xi)$ in the workspace.
            For $1<\cSweep$, we have $\cSweep(x)=\Sweep-1$
            in all cells $x$ ahead of $\front(\xi)$ (inclusive)
            in the workspace.

        \item[Addresses]
            Consider addresses $\cAddr$ in the workspace.
            Except for $\Sweep=1$, they increase continuously.

            In the first sweep, the address track $\cAddr$ is
            either $\lint{-Q}{0}$ or $\lint{Q}{2Q}$, but reduced
            modulo $Q$ on the segment
               $\lint{0}{\front(\xi)}$.

        \item[Drift]
         If $\Sweep \ge \TransferStart$ or $\Sweep=1$ then $\cDrift$
         is constant on the workspace.

        \item[Simulated content]
          The $\Info$ and $\State$ tracks contain valid codewords as
          defined in Section~\ref{subsec:coding}.

        \item[Normality] All cells are unmarked, that is $\cRec.\Core = 0$ throughout
           (see the definition of marking after~\eqref{eq:cRecCore}).

  \end{flushdescription}
  \end{definition}

The following observation comes directly from the definition of health.

\begin{lemma}\label{lem:workspace}
    In a healthy configuration, a cell is either under
    the head, or is in the workspace, or is an outer cell.
\end{lemma}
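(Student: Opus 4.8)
The plan is to read off the statement as a direct bookkeeping consequence of Definition~\ref{def:healthy}, partitioning the tape into three parts: the cell under the head, the non-outer cells, and the outer cells. The claim is essentially that these three categories cover every cell of a healthy configuration, once we observe that Definition~\ref{def:healthy} stipulates that the non-blank cells form a single segment subdivided into colonies, with outer cells on the two ends and the workspace in the middle, and that (by Definition~\ref{def:outer-cells}) blank cells count as both left and right outer cells.

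First I would take an arbitrary healthy configuration $\xi$ and an arbitrary tape cell $x$, and dispose of the trivial case: if $x$ is blank, then by Definition~\ref{def:outer-cells} it is (both a left and a right) outer cell, and we are done. So assume $x$ is non-blank. By the \textbf{Colonies} clause of Definition~\ref{def:healthy}, $x$ lies in the single segment of non-blank cells, which is subdivided into colonies starting from the base. By the \textbf{Workspace} clause, the non-outer cells form a single interval (the workspace), which in each of the listed sweep-ranges is either the base colony, the extended base colony, or an intermediate interval between them --- in all cases an interval of non-blank cells sitting inside that segment. So either $x$ is in the workspace, in which case we're done, or $x$ is a non-blank cell outside the workspace; by definition of ``non-outer'' (workspace $=$ non-outer cells) such an $x$ is then an outer cell. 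The only subtlety is the cell currently under the head: the statement phrases the trichotomy as ``under the head, \emph{or} in the workspace, \emph{or} outer'', so we need no disjointness --- the head cell may simultaneously be in the workspace (indeed the \textbf{Sweep} and \textbf{Addresses} clauses implicitly treat the head cell as part of the workspace) --- so this case requires no separate argument, but I would remark on it to reassure the reader that ``under the head'' is listed only for later convenience and is not claimed to be disjoint from the other two.

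The \textbf{main obstacle}, such as it is, is checking the boundary/front behavior: one must be sure that the \emph{front} $\front(\xi)$ and the head position $\xi.\pos$ both lie in the workspace (or extended base colony), so that no non-blank cell ``between the head and the end of a zig'' slips out of the classification. This is covered by the assertion in the \textbf{Colonies} clause that $\front(\xi)$ is always in the extended base colony, together with the \textbf{Workspace} clause's description of how the workspace relates to $\front(\xi)$ at the transfer and last sweeps; I would cite these explicitly. Beyond that the proof is just a case split over the sweep ranges in the \textbf{Workspace} bullets, each case being immediate, so the lemma is genuinely an ``observation'' and the write-up should be a short paragraph rather than a formal induction.
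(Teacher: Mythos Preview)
Your proposal is correct and matches the paper's approach: the paper gives no proof at all beyond the sentence ``The following observation comes directly from the definition of health,'' and your argument is exactly the unpacking of that definition.

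One minor remark: the ``main obstacle'' you identify is not actually an obstacle. The \textbf{Workspace} clause of Definition~\ref{def:healthy} \emph{defines} the workspace to be the set of non-outer cells (``The non-outer cells form a single interval called \emph{workspace}''), so the dichotomy outer/non-outer is already exhaustive by definition, and nothing can ``slip out of the classification'' regardless of where the head or the front sits. The ``under the head'' clause in the lemma is therefore genuinely redundant, included presumably for later convenience; no case split over sweep ranges or front-position checking is needed.
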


\begin{definition}[Local configuration, replacement]
  A \df{local configuration on} a (finite or infinite)
  interval $I$ is given by values assigned to the cells
  of $I$, along with the following information: whether
  the head is to the left of, to the right of or inside
  $I$, and if it is inside, on which cell, and what is
  the state.

  If $I'$ is a subinterval of $I$, then a local configuration
  $\xi$ on $I$ clearly gives rise to a local configuration
  $\xi(I')$ on $I'$ as well, called its
  \df{subconfiguration}: If the head of $\xi$ was in $I$
  and it was for example to the left of $I'$, then now
  $\xi(I')$ just says that it is to the left, without
  specifying position and state.

  Let $\xi$ be a configuration and $\zeta(I)$ a local
  configuration that contains the head if and only if
  $\xi(I)$ contains the head.
  Then the configuration $\xi | \zeta(I)$ is obtained by
  replacing $\xi$ with $\zeta$ over the interval $I$,
  further if $\xi$ contains the head then also replacing
  $\xi.\pos$ with $\zeta.\pos$ and $\xi.\state$ with
  $\zeta.\state$.
\end{definition}

\begin{definition}[Coordination]\label{def:state-coord}
  The state is called \df{coordinated} with the content
  of the observed cell if it is possible for them to be
  in some healthy configuration.
\end{definition}

Of course, it would be possible to give a finite table describing the coordination
conditions.
But we just point out some consequences of coordination we will use later:

\begin{lemma}[Coordination]\label{lem:coord}
Each $\Core=(\Addr,\Sweep,\Drift)$ value determines uniquely the
$\cCore$ value of the cell it is coordinated with.

In the reverse direction, the relation is less strict:
each $(\cAddr,\cSweep)$ pair determines uniquely the $\Addr$
that can be coordinated with it, and requires
$\Sweep\in\{\cSweep,\cSweep+1\}$, with the following exception:
when $\cAddr$ is within $4\beta$ of a colony end.

\end{lemma}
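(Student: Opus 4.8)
The plan is to prove the two directions of Lemma~\ref{lem:coord} directly from the definition of coordination (Definition~\ref{def:state-coord}), which says that $\Core$ and $\cCore$ are coordinated iff they can co-occur under the head in some healthy configuration (Definition~\ref{def:healthy}). So the whole argument reduces to reading off, from the clauses of ``healthy'', exactly which relations between the state fields $(\Addr,\Sweep,\Drift)$ and the cell fields $(\cAddr,\cSweep,\cDrift)$ are forced at the head position.

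\medskip
\noindent\textbf{Forward direction ($\Core$ determines $\cCore$).} First I would argue that $\cAddr$ is forced. By the \textbf{Addresses} clause the $\cAddr$ track varies continuously over the workspace, and by the \textbf{Colonies}/\textbf{Workspace} clauses the head lies in the workspace (or at its boundary); since $\Addr$ itself is defined to be the value of $\cAddr$ under the head in a healthy configuration (and indeed $\Addr\bmod Q$ equals the true colony offset, with the first-sweep mod-$Q$ reduction being part of the state's own bookkeeping), we get $\cAddr=\Addr$ forced, with the minor caveat about the first sweep handled by the clause's own exception. Next, $\cSweep$: the \textbf{Sweep} clause says cells \emph{behind} the front carry $\cSweep=\Sweep$ and cells \emph{ahead of} the front (inclusive) carry $\cSweep=\Sweep-1$. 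The head is at $\xi.\pos$, and the front is $\xi.\pos+\ZigDepth\cdot d$; since $\ZigDepth\ge 0$ and $d$ is the sweep direction, the observed cell is behind-or-at the front. I would split on $\ZigDepth=0$ versus $\ZigDepth>0$: when $\ZigDepth>0$ the observed cell is strictly behind the front, so $\cSweep=\Sweep$; when $\ZigDepth=0$ the observed cell \emph{is} the front, so the ``ahead of front (inclusive)'' rule gives $\cSweep=\Sweep-1$ when $\Sweep>1$, and $\cSweep=\Sweep$ otherwise. In every case $\ZigDepth$ and $\Sweep$ are part of the state, so $\cSweep$ is determined by $\Core$ together with the rest of the state — and here I should note that strictly $\ZigDepth,\ZigDir$ are extra state fields, so the statement ``$\Core$ determines $\cCore$'' is really ``$\Core$ together with the zigging fields of a healthy state determines $\cCore$''; I would phrase the claim so this is honest, or observe that for the uses later only the determinacy matters. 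Finally $\cDrift$: the \textbf{Drift} clause says $\cDrift$ is constant on the workspace once $\Sweep\ge\TransferStart$ or $\Sweep=1$, and in those regimes its constant value is exactly $\Drift$; for $1<\Sweep<\TransferStart$ the computation phase writes $d$ into $\cHold[j].\Drift$ but the $\cDrift$ field itself is being recomputed, so one reads off from the construction in Section~\ref{subsec:computation-phase} that the coordinated value is again determined by $\Core$ and the phase. Assembling these three gives a single well-defined $\cCore$.

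\medskip
\noindent\textbf{Reverse direction.} Here I would show $(\cAddr,\cSweep)$ pins down $\Addr$ but leaves a two-way ambiguity in $\Sweep$. The address part is immediate: $\Addr=\cAddr$ (modulo the first-sweep reduction, which is deterministic given $\cAddr$ and the $\Sweep=1$ marker), so $\Addr$ is uniquely recovered — \emph{except} near a colony end, which is precisely the zigging boundary: within $4\beta$ of a colony end the relation $(\Z-4\beta)\mid\Addr$ controlling $\Zigzag(d)$ can make the head carry either the ``incoming'' or the ``outgoing'' address offset for the same $\cAddr$, hence the stated exception. For $\Sweep$: from the \textbf{Sweep} clause, a cell with a given $\cSweep$ value $s$ is consistent with the head being in sweep $s$ (if the cell is behind the front) or in sweep $s+1$ (if the cell is at-or-ahead of the front), and both are genuinely realizable by healthy configurations — e.g. choose $\ZigDepth$ appropriately — so $\Sweep\in\{\cSweep,\cSweep+1\}$ and no more.

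\medskip
\noindent\textbf{Main obstacle.} The routine part is just bookkeeping against Definition~\ref{def:healthy}; the delicate part is the boundary behavior near the colony ends, where zigging is driven by the divisibility test $(\Z-4\beta)\mid\Addr$ in Rule~\ref{alg:Zigzag}. I would need to check carefully that within $4\beta$ of a colony end — and only there — a single $\cAddr$ can be coordinated with more than one $\Addr$ (because the head may be on a backward zig that has not yet consumed the full $\Z$-step excursion, so its $\Addr$ has run past $\cAddr$ by up to $\sim\Z$ but $\cAddr$ was last written with the smaller value), while everywhere else the zig is entirely interior and the address track and head address agree. Pinning down the exact width of this exceptional band ($4\beta$, matching the slack in $(\Z-4\beta)\mid Q$ and the $8\beta$ overshoot computed after~\eqref{eq:zig|Q}) is where the real care goes; the rest follows by inspection.
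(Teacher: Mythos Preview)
Your forward direction and the $\Sweep\in\{\cSweep,\cSweep+1\}$ part of the reverse direction are fine (and far more detailed than the paper, whose proof is a single sentence about the exception). Your honest caveat about $\ZigDepth$ is reasonable and does not affect how the lemma is used.

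The real gap is your explanation of the $4\beta$ exception. You attribute the ambiguity to an address drift during an interior zig: ``its $\Addr$ has run past $\cAddr$ by up to $\sim\Z$ but $\cAddr$ was last written with the smaller value.'' That mechanism does not exist. In normal mode the rule $\Move(d)$ updates $\Addr$ by $\pm 1$ at every step, including every step of a zig, so within the workspace the state's $\Addr$ and the observed cell's $\cAddr$ stay in lockstep; there is no point at which $\Addr$ ``runs past'' $\cAddr$. Consequently your argument does not produce any exceptional band, let alone one of width $4\beta$.

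The paper's reason is different and is what you should argue. A cell with $\cAddr$ within $4\beta$ of a colony end can be reached by the head in two structurally different ways: (i) during the transfer sweep the head enters a neighbor colony whose cells are still outer cells carrying their own mod-$Q$ addresses; (ii) after a turn at a sweep endpoint the head advances $\Z-4\beta$ steps and then zigs back $\Z$ steps, so it overshoots $4\beta$ cells into the neighbor colony. In either case the visited cell's $\cAddr$ lies in $[0,4\beta)$ or $[Q-4\beta,Q)$, while the state's $\Addr$ is in the base colony's extended range (for example $\Addr\in[-4\beta,0)$ on a left overshoot, or $\Addr\in[Q,Q+4\beta)$ on a right one). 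That is the source of the non-uniqueness of $\Addr$ given $(\cAddr,\cSweep)$, and also why $\Sweep$ need not lie in $\{\cSweep,\cSweep+1\}$ there (the neighbor cell still carries $\cSweep=\Last(\pm1)$ from a previous work period). Away from the $4\beta$ band neither mechanism applies, and the ordinary workspace bookkeeping forces $\Addr=\cAddr$.
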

\begin{proof}
  The exception comes from the fact that there are two ways for the head to step
  onto cells
  of a neighbor colony: either during the transfer sweep, or at times when the
  head makes a turn at the end of a sweep, and after moving forward
  $\Z-4\beta$ steps, zigs back $\Z$ steps, thereby reaching $4\beta$ steps into
  the neighbor.
\end{proof}

To describe the self-correction process, we need to characterize the kind of
configurations that can be found during it.
We cannot hope to restore health
in all islands created by faults, in a very short
time after the faults occurred.
Indeed, as seen from Figure~\ref{fig:smart},
it may happen that a burst creates an island, but leaves
it with a state of the head that will not require it to
zig back anymore.
Moreover, this may happen in the last sweep of a
work period, while moving the base, say, to the left:
so the island created this way will be seen next, if ever, only if the
simulated computation transfers the base right again.

The following definition classifies the kinds of alteration that noise can
bring to a healthy configuration.
Informally, in islands, the structure may have been damaged, while
in stains, only the $\cInfo$ and $\cState$ tracks could be.
The distress area is where structure is currently being restored.
Recall that the $\Core=(\Addr,\Sweep,\Drift)$,
$\cCore$ and $\cRec.\Core$ tracks were introduced
in~\eqref{eq:Core} and~\eqref{eq:cRecCore}.

\begin{figure}[!ht]
\centering
\begin{tabular}{cc}
\includegraphics[width=2.5in]{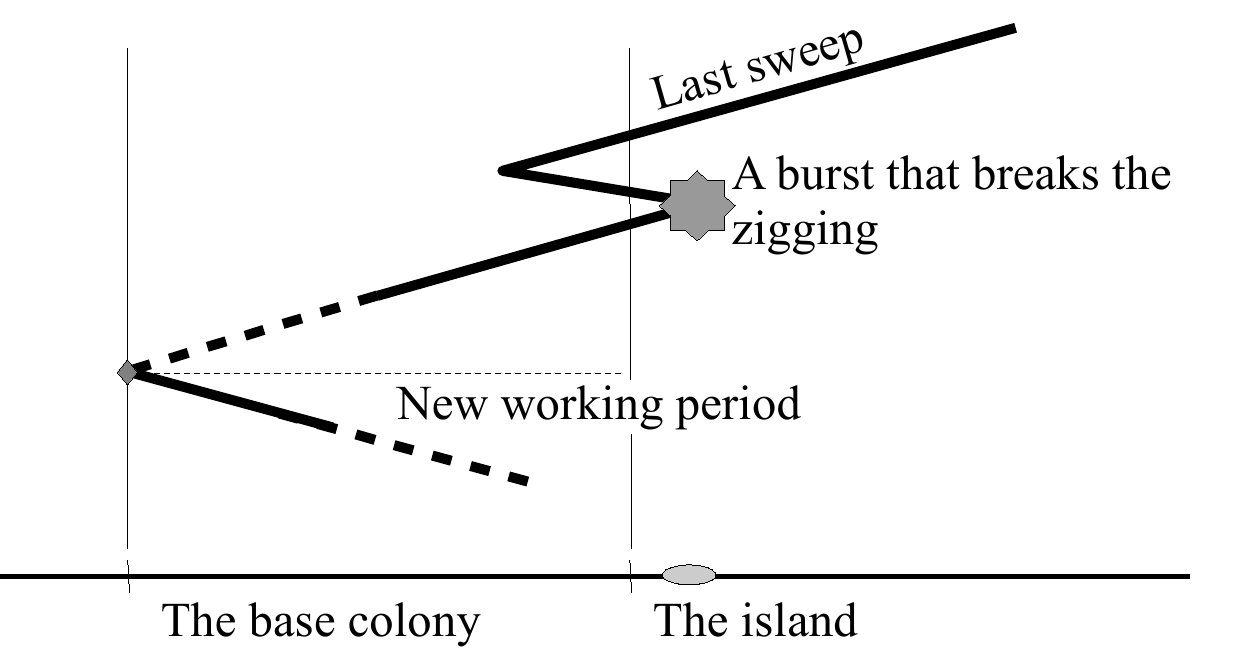} &
\includegraphics[width=2.5in]{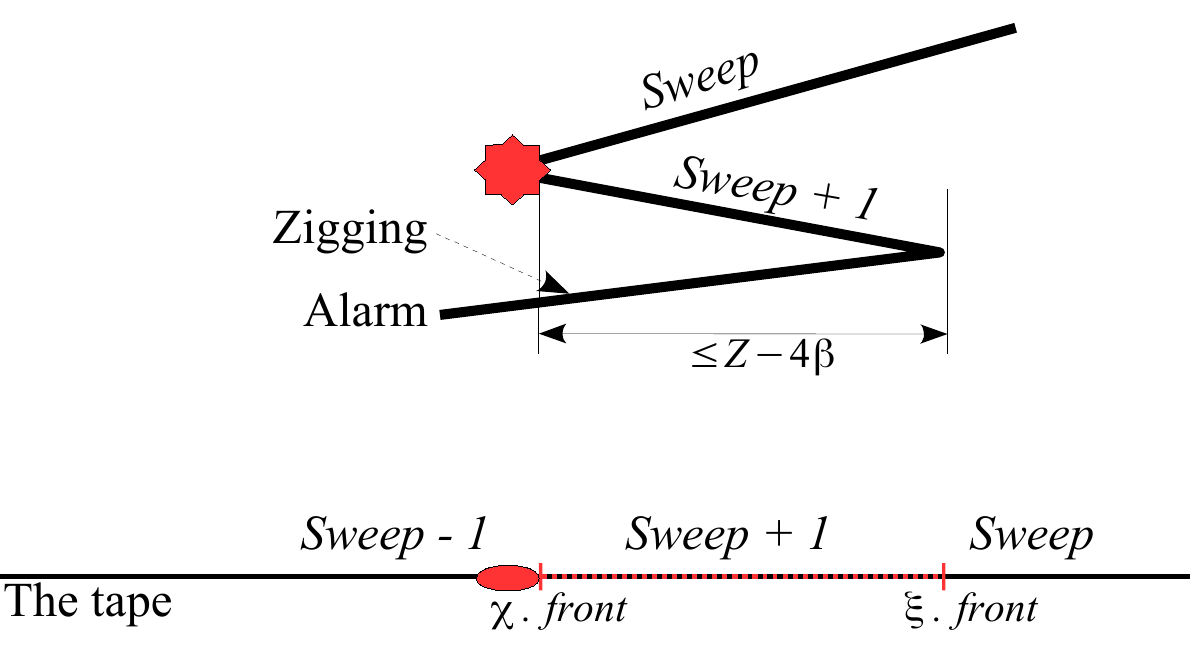} \\
(a) & (b)
\end{tabular}
\caption{(a) A burst during the last visit of the colony, at the bottom of a zig.
It puts the state into normal mode,
with appropriate values of $\ZigDepth$ and $\ZigDir$.
This leaves the created island ``undetected'' until the head returns to the colony
(b) A burst switches the sweep value causing the head to move forward and leaving
an island and a part of the tape without incremented sweep number.
\label{fig:smart}}
\end{figure}
\begin{definition}[Annotated configuration]\label{def:annotated-config}
    An \df{annotated configuration} is a quadruple
        \begin{align*}
            (\xi,\chi,\cI,\cS, \D),
        \end{align*}
    with the following meaning:
    $\xi$ is a configuration,
    $\chi$ is a healthy configuration,
    $\cI$ is a set of intervals of cells
    called \df{islands}, further $\cS \supset \cI$
    is a set of intervals of cells called \df{stains},
    $\D$ is an interval containing
    the head called the \df{distress area}.

    The distress area contains any island containing the head.

    Islands and stains are of size $\leq \beta$.
    The distress area has size $\leq 3\E$.

    We can obtain $\chi$ from $\xi$ by changing
    \begin{bullets}
         \item the $\cCore$ and $\cRec.\Core$ tracks
               in the islands and possibly
               additional $\le Z-3\beta$ cells within $D$;
         \item the $\cInfo$ and $\cState$ track in the stains;
         \item the state, the $\cRec.\Core$ track in $D$,
               and the head position inside $D$.
    \end{bullets}

    We say that an interval $W$ is the \df{workspace} of the annotated
    configuration $\cA$ if it is the workspace of $\chi$.

    The following additional properties are required:
    \begin{flushdescription}
        \item[Islands] 
            At most one island intersects the workspace.
            There are at most 2 islands in each colony that do not
            intersect the workspace.
            If there is more than one, then one is within
            distance
            $\E + 5\beta$   
            from the colony boundary
            towards the base colony.

        \item[Stains]
            In the base colony, either all stains but one are
            within a distance $\E+\beta$ to the left colony boundary,
            or all but one are within a distance $\E+\beta$
            to the right colony boundary.
            In all other colonies, all stains but one are
            within distance $\E + \beta$ of the boundary towards the
            base colony.
        \item[Distress]
            If $D$ is empty then the mode is normal.

    \end{flushdescription}

    We say that a cell is \df{free} in an annotated configuration
    when it is not in any island or $D$.
    The head is \df{free} when $D$ is empty.
    An annotated configuration is \df{centrally consistent}
    if the workspace is free.
\end{definition}

\begin{definition}[Admissible configuration]
\label{def:admissible-config}
    A configuration $\xi$ is \df{admissible} if there
    is an annotated configuration
      $(\xi,\chi,\cI,\cS, \D)$.
    In this case, we say that $\chi$ is a healthy
    configuration \df{satisfying} $\xi$.
    Any change to an admissible configuration is called
    \df{admissible}, if the resulting configuration
    is also admissible.
\end{definition}

The following key lemma shows that an admissible
configuration can be locally corrected.
Recall outer cells from Definition~\ref{def:outer-cells}.

\begin{lemma}[Correction]\label{lem:correction}
    Consider an annotated configuration
     \begin{align*}
             (\xi, \chi, \cI, \cS, \D),
     \end{align*}
    and an interval $\R = \lint{a}{b}$ of length $2\E$, further
    \begin{align*}
        \R_{i}^{j}=\lint{a + 0.1 i\E}{b - 0.1 j\E}\txt{ for } i,j=0,2,4.
    \end{align*}

    Assume that either in the left half or the right half of $\R$,
    at least $\E - 3\beta$ cells of $\xi(\R)$ are nonempty.
    Then it is possible to compute from $\xi.\cCore(\R)$ an interval
       $\hat \R \in \set{\R, \R_{0}^{4}, \R_{4}^{0}}$,
    a local configuration
       $\zeta = \zeta(\hat \R)$
    with no empty cells, such that
        $\chi|\zeta(\hat \R)$
    is healthy, and the following holds:

    \begin{enumerate}[\upshape (a)]
     \item\label{i:correction.inside}
        If $\chi.\pos \in \R_{2}^{2}$ then $\hat \R = \R$, $\zeta.\pos \in \R$,
        and $\zeta.\ZigDepth = 0$.

        If $\chi.\pos < a + 0.2 \E$ then $\hat \R = \R_{4}^{0}$,
        and $\zeta.\pos$ is to the left of $\hat \R$.
        Similarly, if $\chi.\pos \ge b - 0.2\E$ then $\hat \R = \R_{0}^{4}$,
        and $\zeta.\pos$  is to the right of $\hat \R$.

     \item
        The states of nonempty cells of $\xi$ can differ from the corresponding cells
        of $\zeta$ only in the islands, and in at most 
        $\Z-3\beta$ additional positions in an interval in $\D$ containing $\chi.\front$.

     \item
        The computation of $\zeta$ can be carried out by the machine $M_{1}$
        (relying only on $\xi$ and $\R$), using a
        constant number (independent of $\beta$, $Q$)
        of passes over $\R$, and a constant number
        of fields containing values of size $\le Q$.
    \end{enumerate}
\end{lemma}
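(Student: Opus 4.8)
The plan is to reconstruct a healthy local configuration $\zeta$ on a suitable subinterval of $\R$ purely from the $\cCore$ track, then argue that this $\zeta$ can be computed by $M_1$ with the claimed resources. The starting point is Lemma~\ref{lem:coord}: in a healthy configuration each $(\cAddr,\cSweep)$ pair determines the compatible $\Addr$, and (away from the $4\beta$-neighborhood of a colony end) pins $\Sweep$ down to two consecutive values and also determines $\cDrift$ consistency. So the first step is a \emph{voting/reconstruction} step: over the roughly $2\E \ge \E - 3\beta$ nonempty cells in the heavy half of $\R$, the $\cCore$ values outside the (at most one or two) islands and outside the $\le \Z - 3\beta$ damaged cells near the front agree with a healthy pattern; since $\Z - 3\beta$ and $\beta$ are both much smaller than $\E$ (recall $\E = 30\Z$, $\Z = 22\beta$), a simple majority over consecutive-address windows recovers the underlying arithmetic progression of addresses, the sweep number, the drift, and the position of the colony boundary. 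This is where I expect to lean hardest on the size inequalities: I need that the total number of corrupted $\cCore$ cells in $\R$ is a small enough fraction of $\E$ that the address progression and the sweep value are unambiguously decodable, and that the $\le \Z-3\beta$ front-damage cells from Definition~\ref{def:annotated-config} cannot masquerade as a different consistent pattern.

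Having recovered the ``skeleton'' (base-colony boundary, $\cAddr$ progression, $\cSweep$, $\cDrift$ across $\R$), the second step is to decide which of the three candidate intervals $\R$, $\R_0^4$, $\R_4^0$ to output, and to place the head. The case analysis is driven by $\chi.\pos$: if $\chi.\pos \in \R_2^2$, the head is comfortably in the interior, so we keep all of $\R$, set $\zeta.\pos$ inside $\R$ and — crucially — set $\zeta.\ZigDepth = 0$, i.e. we reconstruct $\zeta$ as a configuration caught at the bottom of no zig, which is consistent with health because $\front(\zeta) = \zeta.\pos$ then lies in the extended base colony (the skeleton tells us where that is). If instead $\chi.\pos < a + 0.2\E$, we cannot trust the cells near the left end of $\R$ to tell us the head's state, so we retract to $\hat\R = \R_4^0$ and declare the head to be to the left of $\hat\R$; symmetrically on the right. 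In each branch I then fill in every cell of $\hat\R$ with the healthy values dictated by the skeleton (no empty cells — if $\R$ straddled the blank/nonblank boundary, the retraction to $\R_0^4$ or $\R_4^0$ is exactly what lets us avoid an incompletely-filled end), and I must check that the resulting global configuration $\chi|\zeta(\hat\R)$ satisfies every clause of Definition~\ref{def:healthy}: colonies, workspace, the $\cSweep$ relation relative to $\front$, address continuity, drift constancy where required, valid codewords (the $\cInfo,\cState$ tracks are untouched outside stains, and $\chi$ already had valid codewords), and normality ($\zeta$ has $\cRec.\Core = 0$ everywhere). The bookkeeping is routine but needs to be done branch by branch.

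Step three is the claim (b) about where $\zeta$ differs from $\xi$. Since $\chi$ differs from $\xi$ only on the $\cCore$/$\cRec.\Core$ tracks in islands and $\le \Z-3\beta$ extra cells in $\D$ around $\chi.\front$, on the state/head, and on $\cInfo,\cState$ in stains, and since $\zeta$ copies $\chi$'s values on the structure tracks while leaving $\cInfo,\cState$ as in $\xi$, the only places $\zeta$ can disagree with $\xi$ are exactly the islands and those $\le\Z-3\beta$ front cells; that is the content of (b), and it follows by tracking the definitions rather than by any new argument — though I must make sure the front-damage interval I allow lies in $\D$ and contains $\chi.\front$, which matches the hypothesis verbatim.

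Step four, claim (c), is the implementability statement. Here I would observe that everything above is a bounded computation on $\R$: one pass to tally, for each of a constant number of candidate ``phase'' hypotheses, how many $\cCore$ cells are consistent — but rather than trying a super-constant number of hypotheses, I note that the address progression is read off directly (each $\cAddr$ value together with its offset within $\R$ gives a candidate base location; take the plurality, which by Rule~\ref{alg:MAJ} costs one sweep), then $\cSweep$ and $\cDrift$ similarly by plurality, and then a final pass writes $\zeta$. Each stored quantity — a base offset in $\lint{0}{Q}$, a sweep number $O(Q)$, a drift in $\{-1,0,1\}$, a couple of $\MAJ$ candidate pairs — is a field of size $\le Q$, and the number of such fields and the number of passes are absolute constants, independent of $\beta$ and $Q$. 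The main obstacle, as flagged above, is the first step: making the decoding of the skeleton genuinely unambiguous given that up to two islands plus a $\Z-3\beta$-block of front damage plus several stains sit inside $\R$, and in particular handling the Lemma~\ref{lem:coord} exception near colony ends — which is precisely why the lemma offers the retracted intervals $\R_0^4, \R_4^0$ and ties their use to $\chi.\pos$ being near an end of $\R$. I would treat that exceptional zone explicitly: if the recovered base boundary falls within $4\beta$ of the center region we still have enough clean cells on the dominant side to fix $\Sweep$ by plurality, and if it would fall near an end of $\R$, the retraction removes the ambiguous cells from $\hat\R$ altogether.
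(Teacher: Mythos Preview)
Your overall strategy---recover a skeleton by plurality voting, then fill in---is the right shape, and your treatment of parts (b) and (c) matches the paper. But there is a genuine gap in step one that propagates into step two. You treat $\cSweep$ as a single value to be recovered by plurality over $\R$. When $\chi.\front$ lies inside $\R$ (precisely the case $\chi.\pos\in\R_2^2$ where you want $\hat\R=\R$), the healthy $\cSweep$ track takes the value $\Sweep$ behind the front and $\Sweep-1$ ahead of it; a single plurality returns one of these but says nothing about \emph{where} the transition occurs---and that transition is exactly $\chi.\front$, which you must locate to set $\zeta.\pos$ (since $\zeta.\ZigDepth=0$ forces $\zeta.\pos=\zeta.\front$). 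Relatedly, your choice of $\hat\R$ is ``driven by $\chi.\pos$'', but $\chi$ is not available to the algorithm; you need a criterion computable from $\xi.\cCore(\R)$ alone. The paper handles both issues at once: it takes the plurality $\sigma$ of $\cSweep$ over $R^{-}=\lint{a+3\beta}{b-3\beta}$, records its multiplicity $m_\sigma$, and uses $a+3\beta+m_\sigma$ as an estimate of $\chi.\front$ accurate to within $\Z$. This quantity both decides $\hat\R$ (e.g.\ if $\dir(\sigma)=1$ and $a+3\beta+m_\sigma\ge b-0.3\E$ then $\hat\R=\R_0^4$) and splits $\R$ into two pieces on which a \emph{second} majority $\sigma'$ recovers the other sweep value. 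Your proposal supplies neither the front-location step nor a computable test for $\hat\R$.

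A smaller but real omission: you implicitly assume $\R$ sits inside the workspace. The paper opens with a case split---either half of $\R$ may consist (up to $3\beta$ cells) of outer cells rather than workspace cells, and $\R$ may straddle the workspace boundary. In the mostly-outer case $\hat\R$ is a retracted interval with $\cSweep=\Last(\pm1)$ throughout and the head placed outside; in the straddling case the workspace half is processed as above and the outer half separately. Your remark that ``retraction removes the ambiguous cells'' gestures at this but does not carry it out, and your single address-progression plurality would be confused at a workspace/outer boundary where $\cAddr$ jumps between colonies.
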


\begin{Proof}
     For any interval $I$, let $\alpha(I)$
     denote the majority value of $\xi.\cAddr(x)-(x-a)$ over $I$, further
     $\sigma(I)$ and $\delta(I)$ the majority value
     of $\xi.\cSweep(x)$, and $\xi.\cDrift(x)$ over $I$.
     Let $m_{\alpha}(I)$, and so on, denote the multiplicity of $\alpha(I)$, and so
     on, over interval $I$.

    We now outline a procedure that finds $\hat \R$ and $\zeta$.
    Even if the reasoning below refers to the
    healthy configuration $\chi$ occasionally, the computation
    only relies  on the configuration $\xi$.
    Whenever we write \df{plurality}, we mean a value with
    multiplicity larger than $1/3$ of the total.
    Empty cells are not counted in the total, and do not contribute to the counts.

 \begin{step+}{step:correction.cases}
  We have the following, not mutually exclusive
  possibilities, that can be
  checked:
     \begin{bullets}
         \item
            All but $3\beta$ cells of the left/right half of $\R$ are outer cells.
         \item
            All but $3\beta$ cells of the left/right half of $\R$ are workspace cells.
     \end{bullets}
    \end{step+}
    \begin{pproof}
        This follows from the fact that in
        the healthy configuration $\chi$, the workspace is
        surrounded by outer cells.
    \end{pproof} 

    \begin{step+}{step:correction.outside}
        Assume that at least $1.7\E$ cells of $\R$ are
        left outer cells of $\xi$, or at least $1.7\E$
        are right outer cells.
    \end{step+}
    \begin{prooofi}
        Without loss of generality, assume that at least
        $1.7\E$ cells in $\R$ are left outer cells: set $\hat\R \gets \R_{0}^{4}$.
        The value $\sigma\lint{a}{b-\E/2}$ is necessarily $\Last(1)$.
        Let $\alpha = \alpha\lint{a}{b-\E/2}$.
        Setting
        $\zeta.\cAddr(x) = \alpha+x-a$,
        $\zeta.\cSweep = \Last(1)$, and
        $\zeta.\cDrift = -1$ defines
        $\zeta.\cCore(x)$ accordingly for all $x$ in $\hat \R$ (not leaving
        empty cells).
    \end{prooofi} 

Assume that the above test fails: then given that
$\R$ intersects  at most
3 islands, we can assume that at least
$0.3\E - 3\beta$ cells of $\R$ belong to
the workspace of the healthy configuration $\chi$.

\begin{step+}{step:correction.inside}
  Suppose that $\xi$ has at most $3 \beta$ outer cells in $\R$.
\end{step+}

\begin{prooofi}
       Then the non-workspace cells of
        \begin{align*}
            \R^{-} = \lint{a + 3\beta}{b - 3\beta}
        \end{align*}
       are all island cells, since the non-island non-workspace
       cells of $\R$ must all  be at the ends.

       Compute $\sigma(R^{-})$, and assume without loss of generality
       $\dir(\sigma) = 1$ (the right sweep).

       We claim
             \begin{align*}
                   \chi.\front - \Z
                      \le
                         a^{+} + m_{\sigma}
                      \le
                        \chi.\front+\Z,
             \end{align*}
    where $a^{+} = a + 3 \beta$.
    Indeed, in the healthy configuration $\chi$, the right-sweeping cells inside
    $\R^{-}$ form an interval on the left of $\chi.\front$.
    By the definition of annotation, $m_{\sigma}$ could differ from the size of this
    interval only due to island cells, and possibly an interval of size
    $\le \Z - 3\beta$ containing $\chi.\front$.

\begin{step+}{step:correction.inside.addr}
  Compute the addresses and sweep values.
\end{step+}
\begin{prooofi}
  Recall that we assumed $\dir(\sigma)=1$.
  First we compute the candidate address and
  sweep values in  $\lint{a}{a^{+}+m_{\sigma}}$.

  Note that $\xi.\cAddr(x)-(x-a)$ should be constant as $x$
  runs on all non-island workspace cells of $\R^{-}$
  with the above plurality value of $\sigma$.
  Therefore it has some majority value $\alpha$.

  For cells $x$ in $\lint{a}{a^{+}+m_{\sigma}}$, let
  $\zeta.\cSweep(x)\gets\sigma$,
  $\zeta.\cAddr(x)\gets\alpha+x-a$.
  This can change only island cells or shift the front to the left by
  a number of cells equal to
  the number of island cells encountered in this interval.

  If $a^{+}+m_{\sigma}\ge b-0.3\E$, then
     set $\hat \R\gets \R_{0}^{4}$.

  Assume now $a^{+}+m_{\sigma}<b-0.3\E$, and  set $\hat\R \gets \R$.

  Now we compute the candidate address and
  sweep values in $\lint{a^{+}+m_{\sigma}}{b^{-}}$.
  Let $\sigma'$ be the majority value of
  $\xi.\cSweep$ in $\lint{a^{+}+m_{\sigma}}{b^{-}}$, where $b^{-}=b-3\beta$
  (the majority exists, due to admissibility).
  Note that $\xi.\cAddr(x)-(x-a)$ and $\xi.\cSweep(x)$
  should be constant for almost all $x$ in $\lint{a^{+}+m_{\sigma}}{b^{-}}$.
  For $x$ in $\lint{a^{+}+m_{\sigma}}{b}$, set
  $\zeta.\cSweep(x) \gets \sigma'$, $\zeta.\cAddr(x) \gets \alpha'+x-a$.
  This again can only change island cells or possibly some cells due to
  the left shift of the front.

  In this way, the total number of cell changes
  is at most $3\beta+\Z$: at most $3\beta$
  in the islands and $3\beta+(\Z-3\beta)$ due to the shift of the front.
\end{prooofi} 

    \begin{step+}{step:correction.inside.rest}
      Compute the remainder of $\zeta$.
    \end{step+}
    \begin{prooofi}
      Assume first $|\sigma'-\sigma|=1$, that is
      two consecutive sweep values within a work period.

      If $\sigma'<\sigma$, then set
      $\zeta.\front\gets a^{+}+m_{\sigma}$,
      $\zeta.\Sweep\gets\sigma$;
      otherwise
            $\zeta.\front\gets a^{+}+m_{\sigma}-1$,
            $\zeta.\Sweep\gets\sigma'$.
      If $\min(\sigma,\sigma')\ge\TransferStart$ then
      all over $\hat \R$, set
      the $\zeta.\cDrift$ values to the majority
      of the $\xi.\cDrift$ values over $\R$.

      Assume now that $\sigma,\sigma'$ are the two
      values corresponding
      to the transition to a new work period.

      By assumption $\sigma=1$; we set
      $\zeta.\front\gets a^{+}+m_{\sigma}$.

      The value $\chi.\cDrift$ has a constant value $\delta$ on $\R$.
      We can determine it using majority of
      $\xi.\cDrift$ over $R^{-}$, and replace it all over $\hat\R$.
     \end{prooofi} 

\end{prooofi} 

    Assume that the test~\ref{step:correction.inside} also fails:
    then $\R$ intersects the workspace without being
    essentially contained
    in it, or essentially disjoint from it.
    From the four possibilities of
    part~\ref{step:correction.cases} above,
    now only these remained: one half of $R$ is
    essentially covered by workspace cells
    and the other one is not, or one half of $R$
    is essentially covered by outer
    cells and the other one is not.
    We can therefore make, without loss of generality,
    the following assumption:

    \begin{step+}{step:correction.fixed}
      Assume that the left half of $R$ is not
      covered essentially (that is to within
      $3\beta$) by outer cells.
      Also, either the left half is covered
      essentially by workspace cells, or the
      right half is covered essentially by outer cells.
    \end{step+}
    \begin{prooofi}
     Let $m$ be the number of workspace cells of $\xi$ in $\R$.
     Then the intersection of the workspace with
     $R$ must agree within $3\beta+\Z$ with $\lint{a}{a+m}$,
     just as above in
     part~\ref{step:correction.inside.addr}.
     Now we can carry out the computations of
     part~\ref{step:correction.inside}
     in the interval $\lint{a}{a+m}$ in place of $R^{-}$.
     Since $m \ge 0.3\R - \Z$, there will be still
     sufficient cells left in this
     interval for the correct computation of the majorities.

    \end{prooofi} 

        It is straightforward to check that the conditions
        of the lemma guarantee that the construction of $\zeta$ has
        the properties claimed in the lemma.

\end{Proof}

Assuming that the conditions of Lemma~\ref{lem:correction}
hold, it is clearly possible to compute a constant upper bound on
the number of sweeps of the domain
$\R$ needed for the machine $M_{1}$ to
perform the calculations, resulting in a bound $O(\beta)$
on the total number of steps used.

\begin{definition}[Correction data]\label{def:correction-data}
  The following information
    $\Delta = (s,\alpha,\alpha',\sigma,\sigma',\delta,\f)$
  incorporates all the data defining the corrected healthy
  local configuration $\zeta(\hat \R)$, provided $\R$ is given:
  \begin{bullets}
    \item
      $s\in\{-1,0,1\}$ says which of the three
      alternative values of $R_0^4$, $R$, $R_4^0$ does $\hat \R$ have.
      (In this case, $\hat R = [a + \max(0,s) 0.4\E, b + \min(0,s)0.4\E]$.)
    \item
      $\alpha,\alpha',\sigma,\sigma'$ help computing
      the address and sweep values as seen in the proof
      of the Correction lemma.
    \item
      $\delta$ is the $\cDrift$ value shared by all
      elements of the workspace inside $\hat \R$
      in case $\sigma\ge\TransferStart$ or $\sigma = 1$.
    \item
      $\f = \front(\zeta) - a$ in case $\hat\R = \R$.
  \end{bullets}
\end{definition}

\subsection{Recovery procedure}\label{subsec:recovery}

Starting from a point $x$, the recovery procedure opens an interval
 \begin{align*}
   \R = z_{1} + [-\E,\E), \quad \text{with } z_{0} = z_{1} - \E,
 \end{align*}
to which it applies the algorithm of the proof of the
Correction Lemma~\ref{lem:correction}.
There is a point in which the Correction Lemma did not
specify all the changes:
    when $\hat \R=\R_{0}^{4}$ then it only said that the
    head should go to the right of $\hat \R$, not the exact place where it should go.
    In this case, the procedure will put the head at $z_{1} + \E$,
    that is all the way to the right edge of $\R$.
    Similarly, if $\hat \R= \R_{4}^{0}$ then the head goes
    to $z_{1} - \E$, that is on the left edge of $\R$.
    In both cases we set $\ZigDepth=0$.

The following example shows the need for a careful choice of the recovery
interval.

\begin{example}
    [Motivation for aligned recovery intervals]\label{ex:aligned}
    Denote $C(b)$ a colony with starting point $b$.
    Consider the following scenario.
    During the rightward transferring sweep to colony
    $C(b)$,  while within distance $\E - 4\beta$ of the right boundary $b+Q$,
    the head hits an island, calling alarm.
    The recovery procedure opens a recovery interval and proceeds to work on
    it.
    Now, while the head is on the right boundary of this interval, a burst occurs.
    As a result of this burst, nothing changes inside the recovery interval,
    or in the head position or the state, but an
    island $I$ is created on the right, outside of the recovery interval.
    Assume that the computation  from now on continues to the left of $b+Q$.
    In some much later work period, at the last sweep before moving left from
    colony $C(b)$, a burst leaves an island within distance
    $\E - 5\beta$ from $b + Q$.
    Then, in some much later work period, during the transferring sweep to
    $C(b)$, the head hits this new island and the recovery starts.
    Now we repeat the same scenario as above, creating an island $I - \beta$ which
    will stay there.
    If we continue with this adversarial way of
    putting islands, the entire interval
    $b + 4\beta + \lint{0}{\E +\beta}$ can be covered by islands.
    Then, much later, in a transfer to colony $C(b+Q)$,
    the algorithm of the Correction Lemma~\ref{lem:correction}
    may be defeated.
\end{example}

To prevent such scenarios, the recovery procedure will try to ensure that the
recovery interval have the following special property.

\begin{definition}
    An interval is called \df{aligned} if its endpoints are divisible by $\E$.
We require
\begin{align}\label{eq:requirement-on-Q}
  \E \mid Q.
\end{align}
\end{definition}

For controlling the details,
the procedure uses the field $\Rec.\Addr$ to measure
the distance from point $a$, and a field $\Rec.\Sweep$
to measure the progress, just as in the main program.
There are corresponding $\cRec.\Addr$ and $\cRec.\Sweep$
fields in the cells.
According to the values of $\Rec.\Sweep$, we distinguish \df{stages},
and introduce the \emph{pseudofield} $\Stage$ (it is just a function of
$\Rec.\Sweep$), with values
 \begin{align*}
   \Stage\in\{\Marking,\Planning_{i}\; (i=1,2), \Mopping_{i} (i=1,2)\}.
 \end{align*}
The process makes use of a number of rules:
$\Alarm$, $\Mark$, $\Plan(i)$, $\Mop(i)$ for $i=1,2$.
Whenever we say that a rule ``checks'' something,
it is understood that if the check fails, alarm is called.
In all rules but in $\Mop$, wherever the head steps,
it walks on marked cells or it marks them,
that is it sets $\cRec.\Core \ne 0$.
The rule $\Mop$ is devoted to unmarking.
Zigging will be performed using the fields
             \begin{align*}
                   \Rec.\ZigDepth, \Rec.\ZigDir,
             \end{align*}
and the constant parameter
             \begin{align}\label{eq:Rec.Z}
               \Rec.\Z = 11\beta.
             \end{align}
However, even while zigging, the head stays strictly within the recovery
interval.

The following rule is going to run simultaneously
through all the rest of the recovery procedure.
              \begin{bullets}
                  \item Check if $c.\Rec.\Addr = \Rec.\Addr + d$, where
                        $d=\pm1$ is the direction of the sweep.
                  \item If not zigging, check if $c.\Rec.\Sweep = \Rec.\Sweep -1$.
                        If zigging, check if $c.\Rec.\Sweep = \Rec.\Sweep$.
                  \item Update the field $\Rec.\Addr$ in every move,
                        increasing or decreasing it as we move left or right.
              \end{bullets}

\begin{enumerate}[1.]
    \item\label{i:recovery.alarm}
        The rule $\Alarm$ sets
            $\Mode\gets\Recovering$,
            $\Stage\gets\Marking$.

    \item\label{i:recovery.mark}
        Rule $\Mark$ locates and marks the recovery area with
            $\cRec.\Core\gets 1$, and moves to $z_{0}$.
        (The meaning of the value 1 is that the cell is marked,
        but we did not assign any useful $\Core$ values to it yet.)
        It alarms if any of the cells along the way that it expects to be
        already marked is not.

        In order to mark $\R$, the head moves in a zigging way,
            similarly to what is done in the main simulation,
            as described in point~\ref{i:fields.zigging} of
            Section~\ref{sec:structure},
            except that we do not go outside the interval $\R$.
            Zigging makes sure not to mark too many cells in one sweep
            or without checking that they are marked consistently with
            what was marked before.

        After determining the interval $\R$ from examining a segment of
        $14\beta$ cells, the rule marks one half of this interval, then
        passes over the marked half to mark the other half.
        Here are the details.

        Let $\lint{x_{0}}{x_{1}}$ be the aligned interval
        of length $\E$ containing the cell $x$ where
        alarm was called.

        \begin{enumerate}[i.]
         \item\label{i:recovery.find-alignment}
            This part starts from a cell $x$ (where the alarm was called),
            and ends on cell $x+7\beta$.
            In its sweep 1, moving left, it remembers the majority
            of $\cAddr(y)-(y-x) \bmod Q$
            for $y\in x+\lint{-7\beta}{0}$
            as a candidate modulo $Q$ address $\lambda_{-1}$ for $x$.
            If there is no such majority, the value is undefined.
            It also computes a majority sweep value $\sigma_{-1}$ if a majority exists.
            Now, the machine turns right and while passing over $\lint{x}{x+7\beta}$ it
            computes $\lambda_{1}$ and $\sigma_{1}$ similarly.
            Admissibility implies that if both $\lambda_{j}$ are defined then they are
            equal.
            Moreover, if both are defined then at least one of the $\sigma_{j}$ is defined.

            From these values, we will compute a candidate mod $Q$ address $\lambda$ and a
            candidate direction $\delta$ as follows.

            \begin{enumerate}[(1)]

            \item
                If one of the pairs $(\lambda_{j}, \sigma_{j})$
                is defined and the other one is not, then
                $\lambda\gets\lambda_{j}$, $\delta\gets -j$
                (direction is towards the undefined pair).
                Otherwise let $\lambda$ be the common value of the $\lambda_{j}$.

            \item
                If $\sigma_{j'}\le\sigma_{j}\le\sigma_{j'}+1$ or
                $\sigma_{j}<\sigma_{j'}$ then $\delta\gets(-1)^{\sigma_{j}+1}$,
                that is $\delta$ is the direction of the current sweep as defined
                in~\eqref{eq:sweep-dir}.
            \end{enumerate}

        From $\lambda$ we can compute the values $x_{0},x_{1}$.
        Now we determine $z_{1}$ as follows:
        If $|x-x_{j}|<0.2\E$ for some $j$ then let $z_{1}=x_{j}$.
        Otherwise, let $z_{1}=x_{j}$ for the $x_{j}$ with
        $\sign(x_{j}-x)=\delta$.

        The rule achieves the following conditions.
         \begin{enumerate}[a.]
             \item The point $x$ is in $\R$, at least $0.2\E$
                 away from its boundary.
             \item\label{i:recovery.distance-from-front}
              If $|x - \front(\chi)| < 0.1\E$, then
              $\R$ reaches less than $1.3\E$ backwards
              from $\front(\chi)$.

            Indeed, without loss of generality assume that the
            direction of the sweep is 1.
            From $x - z_1 \leq 0.2\E$,
            we obtain
            \begin{align}\label{eq:eq}
               x - z_1 + \E  \leq 1.2\E.
            \end{align}
            From our assumption and~\eqref{eq:Expansion}
            we have $x \geq \chi.\front - 0.1\E$.
            Applying it to~\eqref{eq:eq}
            yields $\chi.\front - (z_1 -\E) < 1.3\E$.
           \end{enumerate}

        At the end of this rule, being in a cell $y$, we set the field
         \begin{align*}
             \Rec.\Addr=y-z_{0}.
         \end{align*}
    \end{enumerate}

    \item
        The rule $\RangeCheck$ checks that all cells of $\R$ are marked.

    \item
        Rule $\Calc$ carries out, over interval $\R$,
        the algorithm of the Correction Lemma~\ref{lem:correction}
        to determine the interval $\hat \R$ and the local configuration $\zeta(\hat \R)$.
        If none of the cases apply in the algorithm described in
        the proof, the rule calls alarm.
        It remembers the computation result in a field $\Delta$ as given in
        Definition~\ref{def:correction-data}.

    \item Stages $\Planning_{1}$ and $\Planning_{2}$ follow each other.
        Stage $\Planning_{i}$ calls rule $\Plan(i)$.

        $\Plan(i)$ calls $\RangeCheck$ and then $\Calc$.
        In case $i=1$, it writes the resulting $\zeta.\cCore$ values
        on the
        $\cRec.\Core$ track of $\hat \R$, and $\cRec.\Core\gets 1$
        into $\R\setminus \hat \R$.
        In case $i=2$, it just checks whether the result is equal to
        the existing values of $\cRec.\Core$.

    \item Stages $\Mopping_{1}$ and $\Mopping_{2}$ also follow each other.
        Rule $\Mop(1)$ unmarks the cells over $\R$,
        setting
           $\cCore\gets\cRec.\Core$
        at the same time,
        if $\cRec.\Core\not\in\{0,1\}$.
        It relies on the field $\Rec.\Addr$ measuring
        the distance $x-z_{1}$
        of the current cell $x$ from $z_{1}$, and also
        on part $\f$ of the data $\Delta$ introduced
        in Definition~\ref{def:correction-data},
        (and computed in each stage $\Planning_{i}$).

        If $\hat \R=\R_{0}^{4}$ then Rule $\Mop(1)$ moves
        from the left end of $\R$ to the right end while
        unmarking, and stays there.
        If it is $\hat \R=\R_{4}^{0}$ then it moves from
        the right end to the left end while unmarking.
        Otherwise, it first moves to the end of $\R$ in direction
          $-\dir(\zeta.\Sweep)$
        (that is backward from the sweep direction
        from $\zeta$), and then erases the marks up to
        position $\zeta.\front$.
        Then Rule $\Mop(2)$ follows, which is similar,
        but works from the other direction, ending up at $\zeta.\front$
        with no marked cells.

        Zigging is used during the mopping stage just
        as during the marking stage.
\end{enumerate}

\begin{remark}[More on alignment]\label{rem:explanation-R}
   One solution for the problem presented in
   Example~\ref{ex:aligned} would be to zig also outside of the recovery
   interval during the mopping phase.
   However, this would open the door for the errors to influence the recovery
   interval in a sliding manner in yet another, but similar way.
   Alignment snaps the interval $\R$ always to center on the colony boundary,
   preventing a sliding contamination with islands.
   (Stains can still be created in the neighbor colony, but as we will see later in
   Lemma~\ref{lem:recovery}, they stay within $\E + \beta$ cells from the
   colony boundary.)
\end{remark}

It is easy to check that the recovery procedure uses only a constant number of
sweeps, for a total number of steps
\begin{align}\label{eq:K-Recovery}
 \K_{\R} = O(\beta).
\end{align}

\section{Proof of the Main Theorem}\label{sec:proof}

It is useful to spell out the kind of
simulation that machine $M_{1}$ performs.

\begin{definition}\label{def:noisy_simulation}
    A computation history in the sense of  Definition~\ref{def:configs,histories}
    is a \df{$(\beta,V)$-noisy trajectory}, if faults in it are confined to bursts
    of size $\le \beta$ separated by time intervals of size $\ge V$.

    A pair of mappings $(\varphi_{*},\Phi^{*})$
    in the sense of Definition~\ref{def:simulation}
    is a \df{$(\beta,V)$-tolerant simulation}
    of Turing machine $M_{2}$ by Turing machine
    $M_{1}$ if for every string $x \in \Sigma_{2}^{*}$, every
    $(\beta,V)$-noisy trajectory $\eta$ of $M_{1}$
    whose initial configuration is $\varphi_{*}(x)$,
    the history $\Phi^{*}(\eta)$ is a trajectory of $M_{2}$.
\end{definition}

The proof of the main theorem will show, as a side result,
that our simulation is a $(\beta,V)$-tolerant
simulation of $M_{2}$ by $M_{1}$.
We assume that the output of $M_2$ is 0 or 1 written in cell 0.
It is time to define more precisely the concepts
connected with recovery.

\subsection{Annotated history}

Let us analyze the kind of histories that are possible with
sparse bursts of faults.
Recall the definition of (possibly centrally consistent)
annotated configuration in Definition~\ref{def:annotated-config}.

\begin{definition}[Annotated history]\label{def:annotated-hist}
    An \df{annotated history} is a sequence of annotated configurations
    if its sequence of underlying configurations is a $(\beta,V)$-trajectory,
    and it satisfies some additional requirements given below.

    If the head is in a free cell, in normal mode, then
    the time (and the configuration) will be called \df{distress-free}.
    If the annotated configuration at a certain time
    is centrally consistent, then we call
    that time \df{centrally consistent}.
    A time that is not distress-free and was preceded by
    a distress-free time
    will be called a \df{distress event}.

    Consider a time interval $\lint{t}{t+u}$ starting
    with a distress event and ending with the head
    becoming free again.
    It is called a \df{relief event} of \df{duration} $u$ if
    the only possible island that remains from the distress area
    is due to some burst that occurred at a time intersecting
    $\lint{t}{t+u}$.
    Moreover, if such an island
    exists, then the sweep direction from before the
    distress event is preserved, except when the island is
    outside the extended base colony---then it will be reversed.

    The \df{extent} of a relief event is the maximum size
    interval covering the
    distress area during the distress.

    Recall the definition of the parameter $\K_\R$ in~\eqref{eq:K-Recovery}.
    The additional requirements for annotated history are:

    \begin{enumerate}[(a)]

        \item \label{i:annotated-hist.progress}
            Islands are only created by noise.
            Stains and the distress area start out as islands.

        \item\label{i:annotated-hist.distress}
            Each distress event
            is followed immediately by a relief event,
            of duration $\le 3\K_\R$ and extent $\le 3\E$.

        \item\label{i:annotated-hist.stain}
            If a distress-free configuration has  $\Sweep \ge \TransferStart$,
            then the base colony contains no stains from earlier work periods.

   \end{enumerate}
\end{definition}

Lemma~\ref{lem:recovery} (Recovery) will be a crucial step
towards the proof of the main theorem.
Before spelling it out and proving it, we provide
some preparatory lemmas.

\subsection{Undisturbed recovery}

The idea of the proof of relief from damage is the following.
If alarm is called and the recovery process is allowed to complete,
then it carries out the needed correction, as guaranteed by the
Correction Lemma~\ref{lem:correction}.
Most complications are due to the fact that the state after a burst is
arbitrary.

When the mode is normal then zigging will make sure that
the effect is limited to near the island where the burst
happened: for example, the direction of a sweep
cannot be changed in the middle of the workspace,
since then zigging would notice this and call alarm.

But the mode after the burst can be the recovery mode,
with arbitrary values for all fields.
Moreover, a new burst may occur after an alarm, at an
arbitrary stage of the recovery.

In this section, we address the cases when two bad effects
do not combine:
either an alarm is called and completes without a new burst
intervening, or a burst occurs at a distress-free time.
Recall the definitions of the constants $\K_\R$
in~\eqref{eq:K-Recovery} and $\Z$ in~\eqref{eq:Z}.

\begin{lemma}[Undisturbed alarm]\label{lem:alarm}
    Suppose that in an annotated history,  alarm sounds at a time when the front
    of the healthy configuration $\chi$ is at a distance at most
    $2\Z$ from the head\footnote{
        The worst case occurs when the front is
        within $\Z-4\beta$ cells from a colony boundary
        and the head while zigging visits the
        neighboring colony where, within the
        first $4\beta$ steps a burst occurs and puts the state
        into marking (right locating branch).
        Alarm will be called closer to the front,
        and the distress area can grow by
        up to $(\Rec.\Z - 4\beta)+\beta$.},
    and the distress area does not stretch more
    than total size $2\Z$. 
    Suppose also that no burst occurs in the next $K_{\R}$ steps.
    Then the annotation of
    the history can be extended so that relief comes in fewer than
    $K_{\R}$ steps, while no more than $2\E$ cells are added to
    the distress area before it disappears.
\end{lemma}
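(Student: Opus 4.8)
The plan is to track what the recovery procedure of Section~\ref{subsec:recovery} does, step by step, starting from the arbitrary (post-burst) state in which alarm is called, and to verify that it terminates in a relieved configuration within $K_\R$ steps without growing the distress area by more than $2\E$ cells. First I would fix the healthy configuration $\chi$ satisfying the current configuration $\xi$, and note that by hypothesis the front $\chi.\front$ is within $2\Z$ of the head and the distress area has total size at most $2\Z$; since $\E = 30\Z$ and the recovery interval $\R = z_1 + [-\E,\E)$ has length $2\E$, the alarm point $x$ is comfortably interior to a full-width aligned interval, so the $\Mark$ rule succeeds in identifying the aligned $\lint{x_0}{x_1}$ and hence a legitimate $z_1$ and $\R$. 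Here I would invoke the address/sweep-majority computations spelled out in part~\ref{i:recovery.find-alignment}: admissibility guarantees the needed pluralities exist over the $14\beta$-cell window (the only corruption is at most $3\beta$ island cells plus an interval of size $\le\Z-3\beta$ near the front, which is dominated inside a window that eventually widens to $\E$), so the computed $\lambda$, $\delta$, $z_1$ are the correct ones matching $\chi$.

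Next I would walk through the remaining stages in order — $\RangeCheck$, $\Calc$, $\Planning_1$, $\Planning_2$, $\Mopping_1$, $\Mopping_2$ — observing that each either reads and checks marks (and cannot fail, since the marks were just written correctly and no burst intervenes by hypothesis) or writes the values dictated by Correction Lemma~\ref{lem:correction}. The key input is that the conditions of Lemma~\ref{lem:correction} are met on $\R$: because the head is within $2\Z$ of $\chi.\front$ and the front lies in the (extended) base colony, one half of $\R$ contains at least $\E-3\beta$ nonempty cells (workspace or outer), so $\Calc$ finds a valid case, produces $\hat\R\in\{\R,\R_0^4,\R_4^0\}$ and the correction data $\Delta$, and $\Plan(1)$ installs $\zeta.\cCore$ on $\hat\R$; $\Plan(2)$ confirms it; then $\Mop(1),\Mop(2)$ copy $\cRec.\Core$ into $\cCore$ and erase all marks, leaving the head at $\zeta.\front$ (or at the appropriate edge of $\R$ with $\ZigDepth = 0$ when $\hat\R\ne\R$). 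By Correction Lemma part~(c) and the remark after it, each stage uses a constant number of sweeps over an interval of length $2\E = O(\beta)$, so the whole recovery costs $O(\beta) < K_\R$ steps, and — crucially — nothing in the procedure ever steps outside $\R$ (zigging stays strictly inside, by the choice $\Rec.\Z = 11\beta$ and the fact that $x$ is $0.2\E$ from the boundary of $\R$), so the distress area after relief is contained in $\R$, i.e. grows by at most $|\R| = 2\E$ cells before disappearing. Finally I would check that the resulting configuration is distress-free and that it admits an annotation: $\D$ becomes empty, the mode is normal, $\chi|\zeta(\hat\R)$ is healthy by Lemma~\ref{lem:correction}, and by part~\ref{i:correction.inside} the head and $\ZigDepth$ land where the relief-event definition requires, so the sweep direction from before the distress is preserved (reversed only if the residual island is outside the extended base colony, which matches the footnote's worst case). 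Since there is no residual island at all in the undisturbed case — the only changes $\Calc$ makes relative to $\chi$ are inside the old islands and the $\le\Z-3\beta$ front-shift cells, all of which $\Mop$ then repairs — the relief conditions hold trivially.

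The main obstacle I expect is bookkeeping the state right after the burst: the $\Rec$ fields (including $\Rec.\Sweep$, which determines $\Stage$, and $\Rec.\ZigDepth$, $\Rec.\ZigDir$) may be arbitrary, so I cannot assume recovery starts cleanly at $\Stage = \Marking$. The resolution is that the Main rule (Rule~\ref{alg:main1}) tests $\Coordinated$ and whether the cell is marked; in an annotated history the pre-burst cells are unmarked and the post-burst state, if inconsistent, triggers $\Alarm$, which resets $\Mode\gets\Recovering$, $\Stage\gets\Marking$ and thus reinitializes the $\Rec$ fields. I would argue that either the post-burst state is already coordinated with an unmarked cell (so the machine is effectively in normal mode and the hypothesis that alarm \emph{does} sound means this is not the relevant case), or the very first step calls $\Alarm$ and we are in the clean situation above; in the borderline case where a corrupted $\Rec.\Sweep$ makes the machine believe it is mid-recovery, the running consistency check at the top of Section~\ref{subsec:recovery} (comparing $c.\Rec.\Addr$ and $c.\Rec.\Sweep$ against neighbors) fails within one or two steps on the unmarked surrounding cells, again forcing $\Alarm$. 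Absorbing these few extra steps into the $O(\beta)$ budget and into the footnote's $(\Rec.\Z - 4\beta) + \beta$ slack on the distress-area growth closes the argument.
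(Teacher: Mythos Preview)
Your overall strategy --- follow the recovery procedure through its stages and appeal to the Correction Lemma --- matches the paper's, but you skip the one step that actually does the work. The paper's proof does not leave $\hat\R\in\{\R,\R_0^4,\R_4^0\}$ open: it argues that under the hypotheses necessarily $\hat\R=\R$. This is established in two pieces. First, using property~\ref{i:recovery.distance-from-front} of the recovery procedure (the direction-sensitive choice of $z_1$), one gets $\chi.\front \le a+1.3\E=b-0.7\E$, so $\chi.\front$ cannot lie to the right of $\hat\R$. Second, a counting argument on the plurality sweep value gives $m_\sigma\ge(\E-3\beta)/2$, forcing $\chi.\front$ well to the right of $a$, so it cannot lie to the left of $\hat\R$ either. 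Only then does one conclude $\hat\R=\R$, and hence that mopping actually overwrites $\cCore$ on \emph{all} of the distress area and deposits the head at $\zeta.\front$.

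If you allow $\hat\R\ne\R$ your final claim fails: when $\hat\R=\R_0^4$ (say), the rule $\Mop$ sets $\cCore\gets\cRec.\Core$ only where $\cRec.\Core\notin\{0,1\}$, i.e.\ only on $\hat\R$; on $\R\setminus\hat\R$ the marks are erased but the original $\cCore$ is kept. Any island sitting there survives, and the head is dropped at the edge of $\R$ rather than at the front, so you do not get a distress-free configuration and the relief conditions are not met. Note that property~(a) of the recovery procedure alone ($x$ is $0.2\E$ from the boundary of $\R$) only pins $\chi.\pos$ to within $0.1\E$ of the boundary, which is not enough to land in $\R_2^2$; you really need the directional choice of $z_1$ encoded in property~\ref{i:recovery.distance-from-front}.

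A smaller point: your second paragraph worries about arbitrary post-burst values of the $\Rec$ fields, but this is not an issue for this lemma. The hypothesis is that \emph{alarm sounds}, and the rule $\Alarm$ explicitly sets $\Mode\gets\Recovering$, $\Stage\gets\Marking$, so the recovery state is reinitialized cleanly; the analysis of what a burst can do before alarm is reached belongs to Lemma~\ref{lem:burst}, not here.
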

\begin{proof}
    Assume that the conditions of the lemma hold.
    Let $x$ denote the position of the head at the
    moment when alarm is called. Let us follow the
    recovery procedure, to show how the relief is
    achieved.

    After the alarm, in the first two sweeps of the
    recovery procedure, interval $\lint{x-7\beta}{x+7\beta}$
    is created and then, an interval $\R$ is opened
    that extends the distress area.
    For the procedure to succeed, the
    condition of the Correction Lemma~\ref{lem:correction}
    must hold that in one half of $\R$ no more than
    $3\beta$ cells are empty.
    This is trivially true even when the alarm is
    called on the very first few steps of the
    simulation (since we have assumed that the address
    fields of the base colony and its two neighbors are
    nonempty).

    Recall the notation $\R = \lint{a}{b}$ and $\hat \R$
    of the Correction Lemma~\ref{lem:correction}.
    In its proof, we used $m_{\sigma}$ to denote the
    multiplicity of the plurality sweep $\sigma$
    within the interval $\lint{a + 3\beta}{b - 3\beta}$.
    Without loss of generality, assume that  the direction of $\sigma$ is 1.
    We will have
     \begin{align*}
        \chi.\front \le a + 1.3\E = b - 0.7\E,
     \end{align*}
    and therefore $\chi.\front$ is not to the right of $\hat\R$.
    Indeed, the assumptions of the lemma along with
    definitions of $\Z,\E,\Rec.\Z$ in~\eqref{eq:Z}, \eqref{eq:Expansion}
    and~\eqref{eq:Rec.Z} imply that $x$
    is not further than $0.04\E<0.1\E$ from $\chi.\front$.
    Now the claim follows from
    property~\ref{i:recovery.distance-from-front} of the
    recovery procedure.

    Furthermore, at least $m_{\sigma}$ right sweeping cells in $\R$
    will be on the left of $\chi.\front$.
    As a majority among not fewer than $\E-3\beta$ cells,
    $m_{\sigma} \ge (\E-3\beta)/2$.
    This shows that $\chi.\front$ is not to the
    left of $\hat\R$, hence $\hat \R = \R$.
    It follows that the recovery procedure erases the marks in the distress
    area, and rewrites all island cells in $\R$, allowing to erase the distress
    area and the islands to get relief within $\K_{\R}$ steps.
\end{proof}

\begin{lemma}[Burst]\label{lem:burst}
    Assume that the history has been annotated up to a time when
    a burst, creating an island $J_{0}$, occurs at a distress-free time.
    Then the burst is followed by a relief event of duration
     $\le \K_\R + \Z$
    and extent $\le 3\E$.
\end{lemma}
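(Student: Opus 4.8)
The plan is to reduce the statement to the Undisturbed Alarm Lemma~\ref{lem:alarm}. Write $t_{0}$ for the time of the burst and let $\chi$ be the healthy configuration witnessing that $t_{0}$ is distress-free: the mode is normal, the distress area is empty, and the head sits on a free cell, within $\ZigDepth\le\Z$ of $\chi.\front$. The burst changes the state arbitrarily, moves the head by at most $\beta$, and overwrites an interval $J_{0}$ of size $\le\beta$ around the head. Take as the post-burst healthy configuration the $\chi'$ obtained from $\chi$ by leaving everything outside $J_{0}$ unchanged (moving the front by at most $\beta$ if the front boundary fell inside $J_{0}$), and annotate the post-burst configuration with $J_{0}$ as an island and with distress area $D$ equal to the smallest interval containing $J_{0}$ and the head, so $|D|\le 2\beta$ and the head is within $\Z+2\beta$ of $\chi'.\front$. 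Since $D\neq\emptyset$, the configuration right after the burst is not distress-free, so the burst is a distress event, and it remains to exhibit the relief.

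There is an easy case: the mode after the burst is normal and, executing the Main rule, the head stays coordinated with every observed cell until it has left $J_{0}$ (at most $\beta$ steps). By the Coordination Lemma~\ref{lem:coord} the $\cCore$ tracks of the cells it reads pin down its $\Addr$ and, in the relevant sweep ranges, its $\Drift$, and constrain $\Sweep$ to lie within $1$ of the cell's $\cSweep$; the only data the burst can still have corrupted are $\ZigDepth,\ZigDir$ and the choice $\Sweep\in\{\cSweep,\cSweep+1\}$. If these happen to be consistent with $\chi'$ (with the front possibly advanced), then once the head is out of $J_{0}$ the whole configuration is healthy away from $J_{0}$, and we may set $D=\emptyset$ at that moment; relief comes in $\le\beta<\K_{\R}+\Z$ steps, the distress area never exceeded $2\beta<3\E$, and the conditions on the surviving island $J_{0}$ (created by the burst within the relief interval) and on the sweep direction are immediate from the zigging geometry.

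Otherwise, alarm is called quickly. If the post-burst mode is normal but the surviving $\ZigDepth,\ZigDir$ (or the chosen offset of $\Sweep$) are \emph{not} consistent with any healthy configuration built on the already-present structure, then within one zigging cycle the head either zigs back onto a cell whose $\cSweep$ contradicts its state --- a head with $\ZigDepth=k>0$ must be $k$ cells behind its own front, hence on a cell carrying the ``behind the front'' sweep value --- or it crosses a colony boundary where coordination breaks; in either case a coordination check fires within $\le\Z$ steps, and the head, having moved at most $\Z$ more, is still within $2\Z$ of $\chi'.\front$, with $|D|\le 2\beta+\Z<2\Z$. If the post-burst mode is recovery, then since the cells around the head are unmarked except inside $J_{0}$, the running consistency checks of the recovery procedure (on $\cRec.\Addr$, $\cRec.\Sweep$, and on cells expected to be already marked) fail within at most $\beta$ steps, again raising alarm with the head within $\Z+2\beta<2\Z$ of $\chi'.\front$ and $|D|\le 3\beta<2\Z$. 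Thus in all cases alarm sounds within $\le\Z$ steps, the front of the still-valid healthy configuration $\chi'$ lies within $2\Z$ of the head, and the distress area has total size $\le 2\Z$ --- exactly the hypotheses of Lemma~\ref{lem:alarm}.

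Finally, since bursts are separated by at least $V$ and $V=O(\beta^{2})$ was chosen with $V\gg\K_{\R}+\Z$, no burst intervenes in the $\K_{\R}$ steps after the alarm. Lemma~\ref{lem:alarm} then gives relief in fewer than $\K_{\R}$ steps, adding at most $2\E$ cells to the distress area before it disappears; together with the $\le\Z$ steps before the alarm this is a relief event of duration $<\K_{\R}+\Z$ and extent at most $2\Z+2\E\le 3\E$, using $\Z=22\beta$ and $\E=30\Z$ from~\eqref{eq:Z} and~\eqref{eq:Expansion}. The main obstacle is the third step: showing that however the burst corrupts the state --- in particular the zigging fields, or a landing in recovery mode at an arbitrary stage --- the machine must call alarm within $\le\Z$ steps while the distress area and the head's distance to $\chi'.\front$ stay inside the window demanded by Lemma~\ref{lem:alarm}. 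The choice of which healthy configuration to reattach, and the precise constants, are then routine from the definitions.
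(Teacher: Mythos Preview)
Your reduction to Lemma~\ref{lem:alarm} matches the paper's overall strategy, but the third paragraph contains a real gap: the claim that in recovery mode ``the running consistency checks \dots\ fail within at most $\beta$ steps'' is false for the $\Marking$ stage. The marking rule is designed to \emph{extend} the marked area; it only alarms when it finds unmarked cells where it has \emph{itself} previously put marks, or when an alignment inconsistency appears. If the burst lands the state at (or near) the very beginning of $\Marking$---with $\Rec.\Addr$, $\Rec.\Sweep$ consistent with ``nothing marked yet''---then the head will happily locate $z_{1}$, mark out a full aligned interval $\R$ of size $2\E$, and run through $\Planning$ and $\Mopping$ without ever sounding alarm. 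No consistency check fails: the cells around the head are unmarked, and at this stage that is exactly what the rule expects. This scenario is not exotic; the adversary controls the post-burst state completely and can choose it.

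The paper's proof therefore does \emph{not} try to force alarm in every recovery-mode case. Instead it splits on the stage after the burst. For $\Planning_{i}$, unmarked cells trigger alarm immediately and Lemma~\ref{lem:alarm} applies. For $\Mopping$, the rule either alarms on seeing no marks or, since $\cRec.\Core=0$ outside islands, writes nothing harmful and drops back to normal mode within a few steps, after which the normal-mode analysis applies. For $\Marking$, one argues that \emph{if} alarm occurs it must be within the first $2\,\Rec.\Z$ zigging steps (and then Lemma~\ref{lem:alarm} handles it); \emph{if not}, the freshly started recovery is a legitimate one over an admissible configuration, so the Correction Lemma guarantees it terminates with relief in $\le K_{\R}$ steps. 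You need this second branch; your argument as written does not cover it, and no uniform ``alarm within $\beta$ steps'' claim will.
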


\begin{Proof}
    We consider various situations after the burst.
    Recall that we called an interval $\R$ of length
    $2\E$ \df{aligned} if in a
    healthy configuration satisfying the present one,
    its ends have addresses divisible by $\E$
(equivalently, if its end positions as absolute integers are divisible by $\E$).

Let $\chi$ denote the healthy configuration that is part of the annotation at
the time of the burst.
Since the burst occurs at a distress-free time, the head is within $\Z$ from
$\chi.\front$ when it happens.
In what follows, we will sometimes refer to $\chi.\front$ of this moment as just
the \df{front}.

    \begin{step+}{step:burst.normal}
        Assume first that the mode immediately after the burst is normal.
    \end{step+}
    \begin{prooofi}
        Without loss of generality, assume that the sweep was to the right.
        We start at some position $x$ that is either in island $J_{0}$ or next to it.
        Now the head zigs backward and forward
        $\Z$ steps (see~\eqref{eq:Z}), with respect to the sweep direction,
        between any two sequences of $\Z-4\beta$
        forward moving steps.
        In any of these, it may discover an incoordination
        and call alarm, in which case Lemma~\ref{lem:alarm}
        becomes applicable.

    \begin{step+}{step:burst.normal.same-sweep}
        Assume first that the burst does not change the sweep and address.
    \end{step+}
    \begin{prooofi}
      In this case, the head will continue its forward sweep, with just possibly
      changed zigging.
      While hitting elements of the island $J_{0}$ it may sense incoordination
      and call alarm.
      If this happens then Lemma~\ref{lem:alarm} applies, since we are at most
      $\Z+3\beta$ steps behind front, and at most $3\beta$ steps ahead it.

      Before the head manages to traverse $J_{0}$, it
      may hit another island causing an alarm.
      The point where this alarm can be called may be at most
      $\Z-3\beta$ steps ahead of the front, so Lemma~\ref{lem:alarm} applies
      again.

      In case of alarm, the recovery area will cover the island $J_{0}$, and if
      it was triggered by another island then that one, too.
      Any points of island $J_{0}$ traversed during the
      progress and zigging can be erased from the island, and after a complete
      cycle of zigging occurs the untraversed parts of $J_{0}$ may stay as an
      island.

      How can it happen that not the whole $J_{0}=:\lint{a}{a+\beta}$ is
      traversed?
      In this case, the next backward zig does not cross $J_{0}$, so it starts
      from $\ge a+\Z$.
      To get there we need $\chi.\front\ge a+\Z-(\Z-4\beta)=a+4\beta$ when we
      start.
     \end{prooofi}

    \begin{step+}{step:burst.normal.changed-sweep}
        Assume now that the burst changes $\Addr$ or $\Sweep$.
     \end{step+}
     \begin{prooofi}
       Lemma~\ref{lem:coord} says that unless $\cAddr$ is in a certain interval
       of length $4\beta$, the pair $(\cAddr,\cSweep)$ pair determines
       uniquely the $\Addr$ value coordinated with it.
       If the burst changes $\Addr$ then therefore this will be noticed as soon as the
       head leaves the island and possibly this interval, causing an alarm, so
       Lemma~\ref{lem:alarm} applies.

       Similarly, if $\Sweep$ changes by more than 1 then it will be noticed, as
       soon as the head leaves the island or the area of size $4\beta$
       mentioned.
       If it just changes by 1 then the head reverses direction, and the
       incoordination may not be immediately noticed when stepping off the
       island.
       But zigging will take us all the way across $J_{0}$ and
       therefore if alarm does not sound, $J_{0}$ can be erased (this can only
       happen if $J_{0}$ is at the end of the colony where the sweep would have
       changed anyway).
       Indeed, just as above, we can see that
       the only possibility that the next backward
       zig does not cross $J_{0}$ would be that
       the front is to the left of $a+\beta$ by at
       least $4\beta$.
       But this is impossible, since as the original sweep is to the right,
       the head was not right of the front when the island occurred.
     \end{prooofi}
\end{prooofi}

\begin{step+}{step:burst.recovering}
    Suppose that the mode after the burst is $\Recovering$.
\end{step+}
\begin{prooofi}
    In the recovery rule, as defined in Section~\ref{subsec:recovery},
    the head moves around in an aligned interval $\R$ of size $2\E$.
    The marked area is extended in stage $\Marking$, and shrunk in
    stages $\Mopping_{i}$.
    If the stage after the burst
    is $\Planning_{i}$, then alarm is called almost immediately
    (possibly passing through some island cells first),
    since we assumed a start from a distress-free configuration,
    in which by definition no non-island cells are marked.
    Then Lemma~\ref{lem:alarm} applies.

    \begin{step+}{step:burst.recovering.marking}
        Suppose that the stage after the burst is $\Marking$.
    \end{step+}
    \begin{prooofi}
        By its design, the marking rule marks new cells
        while also using a rule similar to $\Zigzag(d)$, but moving (and marking)
        at most $\Rec.\Z-4\beta$ cells while moving in one direction.
        Alarm is only called when an alignment problem is found,
        or non-marked cells are found where marked ones are expected.
        Therefore alarm can only occur within the
        first $2\Rec.\Z$ steps after a burst.
        Indeed, zigging checks alignment with the cells marked earlier.
        If alignment inconsistency is not found then it will not be found
        later either.

        It follows that in case of new alarm, Lemma~\ref{lem:alarm}
        applies, and the recovery reprocesses all cells marked after the burst.
    \end{prooofi}

    \begin{step+}{step:burst.recovering.mopping}
       Assume that after the burst a mopping stage is entered.
    \end{step+}
    \begin{prooofi}
        The mopping stages only erase marks, and apply
        $\cCore\gets\cRec.\Core$.
        Since we started in a distress-free configuration, we had
        $\cRec.\Core=0$ everywhere but in the islands.
        Marking will not change the $\cCore$ value anywhere else.
        It follows that within at most as many cells as the total
        length of islands possibly encountered, there is either an
        alarm due to not seeing marks, or return to normal mode.
        From there on, the analysis of part~\ref{step:burst.normal}
        applies.
    \end{prooofi} 
\end{prooofi} 
\end{Proof}

\subsection{Disturbed recovery}

We would say that recovery is disturbed when a
burst occurs during a recovery process started by an alarm.
Since bursts are rare, the alarm in question must have happened
then without a burst, which could occur
only on encountering some island $J_{1}$.
Since there was no recent burst (within the last $V$ steps),  this encounter
could have occurred only during the transfer phase.

\begin{lemma}[Disturbed recovery]\label{lem:disturbed}
    Assume that the history has been admissible up to a time when
    the head steps on an island $J_{1}$,
    in a transfer sweep $\TransferStart(\delta)$, $\delta\in\{-1,1\}$
    or in the first zigging into the neighbor colony immediately after this sweep.

    Then the annotation can be extended such that
    a relief event of duration $\le 3\K_\R $ and extent $\le 3\E$ occurs.
\end{lemma}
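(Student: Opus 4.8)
The plan is to trace the recovery that the island $J_{1}$ triggers and to show that a single interfering burst only lengthens it to $\le 3\K_{\R}$ and widens it to extent $\le 3\E$. When the head steps onto a cell of $J_{1}$ in normal mode, that cell is either incoordinated with the state or carries a nonzero $\cRec.\Core$, so the \textbf{Main rule} calls $\Alarm$; this is the distress event, and since the head was on a free cell the previous step, the preceding time was distress-free. Because the step occurred in the transfer sweep $\TransferSw(\delta)$ (or in the first zig into the neighbour colony immediately after it), the head is within $O(\Z)$ cells of $\chi.\front$ when it alarms, and the interval opened by rule $\Mark$ is the \emph{aligned} length-$2\E$ interval $\R^{(1)}$ straddling the colony boundary next to $J_{1}$; in particular $\chi.\front$ sits deep inside $\R^{(1)}$. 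If no burst occurs in the next $\K_{\R}$ steps, Lemma~\ref{lem:alarm} applies directly: relief comes in fewer than $\K_{\R}\le 3\K_{\R}$ steps and the distress area stays below $\beta+2\E\le 3\E$ cells (using $\E=30\Z$ from~\eqref{eq:Expansion}). So assume a burst does occur during this recovery; since consecutive bursts are $\ge V$ apart and $V$ is quadratic in $Q$ while $\K_{\R}=O(\beta)$ by~\eqref{eq:K-Recovery}, there is exactly one, creating an island $J_{0}$.

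At the instant of the burst the head is confined to $\R^{(1)}$, so $J_{0}$ lies within $\beta$ of $\R^{(1)}$, hence within $O(\beta)$ of the same boundary. I would first note that the post-burst configuration is still admissible: in the pre-burst annotation the distress area $D\subseteq\R^{(1)}$ absorbs every mark placed so far, and the burst only adds $J_{0}$ plus an arbitrary state, all of which fit into a distress area $D'\subseteq\R^{(1)}\cup J_{0}$ of size $\le 3\E$ over the same healthy configuration $\chi$. Here \emph{alignment} does its job, exactly as motivated in Example~\ref{ex:aligned} and Remark~\ref{rem:explanation-R}: any fresh recovery interval opened from a point of $\R^{(1)}\cup J_{0}$ is again an aligned length-$2\E$ interval based at that boundary, so it is either $\R^{(1)}$ itself or the neighbouring aligned block, and its union with $\R^{(1)}$ has length $\le 3\E$ --- which both bounds the extent and forbids a sliding accumulation of islands.

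Next I would run the case analysis on the state immediately after the burst, in parallel with the proof of Lemma~\ref{lem:burst}, the new ingredient being the marks that the interrupted $J_{1}$-recovery left inside $\R^{(1)}$. If the post-burst mode is $\Normal$, the head stands within $\beta$ of cells still carrying $\cRec.\Core\neq 0$ (the recovery rules walk only on marked cells, and a burst can clear at most $\beta$ of them); since $\Z=22\beta$, one zig cycle brings the head back onto a marked cell, or else it first leaves $\R^{(1)}$ and meets cells coordinated only with $\chi$, where the coordination check of Lemma~\ref{lem:coord} fails within $O(\beta)$ steps --- either way $\Alarm$ fires within $O(\Z)$ steps. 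If the post-burst mode is $\Recovering$, I would split on the (arbitrary) stage: in $\Planning_{i}$ or $\Mopping_{i}$ the running checks of $\cRec.\Addr$ and $\cRec.\Sweep$ against the marks actually present fail within $O(\Z)$ steps, and in $\Marking$ the alignment check against the earlier marks fails within $2\,\Rec.\Z$ steps --- the only change from Lemma~\ref{lem:burst} is that ``earlier marks'' now refers to those of the $J_{1}$-recovery rather than to none at all. In every branch a new $\Alarm$ is raised within $O(\Z)$ steps, from a point of $\R^{(1)}\cup J_{0}$, and the recovery interval $\R^{(2)}$ it opens is an aligned length-$2\E$ interval overlapping $\R^{(1)}$.

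The second recovery is now undisturbed --- the single burst has been spent --- so I would finish exactly as in the proof of Lemma~\ref{lem:alarm}: $\chi.\front$ lies in the region of $\R^{(2)}$ where Correction Lemma~\ref{lem:correction}(a) forces $\hat\R=\R^{(2)}$, so the procedure rewrites every island and every marked cell in $\R^{(2)}$, erases the marks and the distress area, and returns to normal with $\ZigDepth=0$; the only island that can survive is $J_{0}$, from the one burst, and the sweep direction is the one the recovered healthy configuration dictates, so all clauses of the relief event in Definition~\ref{def:annotated-hist} hold. Counting steps, the interrupted first recovery costs $\le\K_{\R}$, re-alarming costs $O(\Z)$, and the second recovery costs $\le\K_{\R}$, for a total $\le 2\K_{\R}+O(\Z)\le 3\K_{\R}$ since $\Z\ll\K_{\R}$; the extent is $\le 3\E$ by the alignment bound. \textbf{The main obstacle} I expect is precisely this post-burst case analysis: one must verify in each branch that the leftover $J_{1}$-recovery marks --- rather than a scrambled state accidentally consistent with $\chi$ that drifts away orphaning those marks --- force the re-alarm within $O(\Z)$ steps, and one must pin down that $\chi.\front$ stays far enough inside the \emph{new} aligned interval $\R^{(2)}$ for the Correction Lemma to give full reprocessing and hence genuine relief --- the quantitative control of how far $\R^{(2)}$ can shift relative to $\chi.\front$ being the delicate part.
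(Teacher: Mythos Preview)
Your overall architecture matches the paper's, but there is a genuine gap in the step where you assert that after the post-burst alarm the second recovery ``rewrites every island and every marked cell in $\R^{(2)}$'' because ``$\chi.\front$ lies in the region of $\R^{(2)}$ where Correction Lemma~\ref{lem:correction}(a) forces $\hat\R=\R^{(2)}$.'' This is exactly the point that fails. You yourself allow that $\R^{(2)}$ may be the neighbouring aligned block, i.e.\ shifted by $\pm\E$ from $\R^{(1)}$. When that happens, $\chi.\front$ need \emph{not} lie in $\R^{(2)}_{2}{}^{2}$: the paper shows that for $z'_{1}=z_{1}+\E$ one always has $\chi.\front\notin\lint{z'_{1}-0.3\E}{z'_{1}+0.3\E}$, so the Correction Lemma yields $\widehat{\R'}=(\R')_{4}^{0}$; and for $z'_{1}=z_{1}-\E$ one can have $\chi.\front\ge z'_{1}+0.3\E$, giving $\widehat{\R'}=(\R')_{0}^{4}$. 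In both situations the second recovery terminates with the head placed at an end of $\R'$, while marked cells from the first recovery survive in $\R^{(1)}\setminus\R'$. Those leftover marks trigger a \emph{third} alarm (via zigging, or immediately upon landing on them), and it is this third, undisturbed recovery over the original $\R$ that finally cleans up. This is why the duration bound is $3\K_{\R}$ and not the $2\K_{\R}+O(\Z)$ your accounting gives; your ``main obstacle'' paragraph correctly identifies the delicate point but then assumes it away.

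A second, smaller gap: your post-burst case analysis asserts that in stages $\Planning_{i}$ or $\Mopping_{i}$ the running checks ``fail within $O(\Z)$ steps.'' This is not always so. The burst may leave the state and the $\cRec$ fields consistent enough that the interrupted recovery \emph{continues seamlessly} and runs to completion without any new alarm; the paper treats this explicitly (the lasting effect is then just the island $J_{0}$, possibly outside $\R^{(1)}$). Likewise, if the burst lands the head in normal mode and $J_{0}\supseteq D(t_{1})$, there may be no marks left to trigger alarm, and one falls back on the analysis of Lemma~\ref{lem:burst}. You need these ``no new alarm'' branches as well as the ``alarm soon'' branch.
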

\begin{Proof}
    Without loss of generality,
assume $\delta = 1$, that is the direction of the transferring sweep
    is to the right.
    Let $\chi$ denote the healthy configuration that is part of the annotation at
    the time when an alarm occurs at some cell
 \begin{align*}
   x_{0}.
 \end{align*}
    In what follows, we will sometimes refer to $\chi.\front$ of this moment as just
    the \df{front}.

    In the transferring phase all structural (that is $\cCore$)
    information  in the non-island cells we pass is
    computable from the field $\Core$ of the state (see Lemma~\ref{lem:coord}).
    Therefore if no alarm or burst occurs while the head is
    on $J_{1}$, then the part of $J_{1}$ that was passed can be
    deleted from the island.

    If no burst occurs within the next $2\K_\R$ steps, then
    Lemma~\ref{lem:alarm} is applicable.
    From now on, we assume that a burst occurs during this time,
    creating an island $J_{0}$.

    If the burst occurs while the head is on island $J_{1}$,
    then Lemma~\ref{lem:burst} is applicable.
    Assume therefore that alarm occurs at some time
    while the head is on $J_{1}$ (or over a cell next to it),
    but a burst occurs only at some later time $t_{1}$.
    Let $\D(t)$ denote the interval of marked cells at time $t$,
    created by the recovery process started by the alarm.

    \begin{step+}{step:disturbed.alarm-soon}
        Suppose that new alarm will be called within
        $2\Z$ steps after the burst at some cell $x$.
    \end{step+}
    \begin{prooofi}
      Then zigging implies that
      we are also not removed beyond distance $\Z$ from $\D(t_{1})$ at the
      time of the alarm.

        After the burst, we are within distance $\beta$ from $\D(t_{1})$.
        If the recovery before the burst did not determine $z_{1}$ yet, then the
        size of $D(t_{1})$ is at most $7\beta$.
        Since after the burst we are burst-free for a while,
        Lemma~\ref{lem:alarm} guarantees the relief.
      Otherwise, recovery after the initial alarm has already defined cell
      $z_1$ of the recovery procedure.
        Then,
         \begin{align}\label{eq:alarminterval}
             x_{0} \in \lint{z_{1} -0.3\E}{z_{1} + 0.2\E}.
         \end{align}
        A new recovery area $\R' = z'_1 + [-\E,\E)$ will be created.
        The alignment guarantees $z'_1 = z_1$, $z_1 - \E$ or $z_1 + \E$.
        The direction $\delta$ computed after the second alarm is
        necessarily the same as the one computed after the first one.

        Now, if alarm after the burst is called in
        the same interval~\eqref{eq:alarminterval} as the initial alarm
        then the same recovery
        interval will be opened again, hence $z'_1 = z_1$.
        If $x < z_1 -0.3\E$, then $z'_1 = z_1 - \E$.
        Finally, if alarm after the burst is called to the right
        of $z_1 + 0.2\E$, then $z'_1 = z_1 + \E$.

        If $z'_1 = z_1$, then all cells of $\D(t_1)$
        will be reprocessed, and
        the recovery succeeds.
    \begin{step+}{step:disturbed.alarm-soon.left}
        Assume $z'_1 = z_1 - \E$.
    \end{step+}
    \begin{prooofi}
        If $\chi.\front < z'_1 + 0.3\E$ then $\widehat{\R'} = \R'$.
        Then after the new recovery finishes,
        marked cells in interval $\D(t_{1})\setminus \R'$ of length $\leq \E$
        may still be there.
        However, the mode after the recovery is normal, and
        we have assumed that the sweep direction is to the right.
        Therefore, these marked cells will be reached, and
        alarm will be triggered.
        Indeed, even if the front is at the colony boundary,
        and $z_{1}$ is the colony boundary (in which case the head is turning
        left), within $\Z - 4\beta$ steps
        the zigging will start, and the head will pass over
        $z_{1}$, where marked cells may exist.
        If they exist they trigger alarm, and an undisturbed recovery,
        with a recovery interval equal to $\R$, will eliminate
        remaining marks.
        \footnote{We allow the head to zig
        into the neighbor colony in order to definitely reach all remaining
        marked cells.}

        If $\chi.\front \ge z'_{1} + 0.3\E = z_{1} - 0.2\E$,
        then $\widehat{\R'} = (\R')_0^4$.
        Once the recovery over $\R'$ finishes, the head will be left on its
        right end, where alarm will be called, since marked cells will be found.
        Then, a new undisturbed recovery cleans the remaining
        marks and in the previous case.
      \end{prooofi}

      \begin{step+}{step:disturbed.alarm-soon.right}
        Consider the case $z'_{1} = z_{1}+\E$.
      \end{step+}
      \begin{prooofi}
          Now the new recovery interval does not contain the front $0.2\E$ deep
          inside.
          Indeed, alarm at $x_{0}$ was called on an island
          either when moving right, or while zigging into a right neighor
          colony (this zigging goes at most $4\beta$ deep).
          Therefore,  the position where alarm is called for the first time
          can only be to the right of the front
          within a distance not exceeding $\Z  - 4\beta$
          (see Fig.~\ref{fig:disturbed-recovery1}).

         \begin{figure}[!ht]
            \centering
              \includegraphics[height=0.7in]{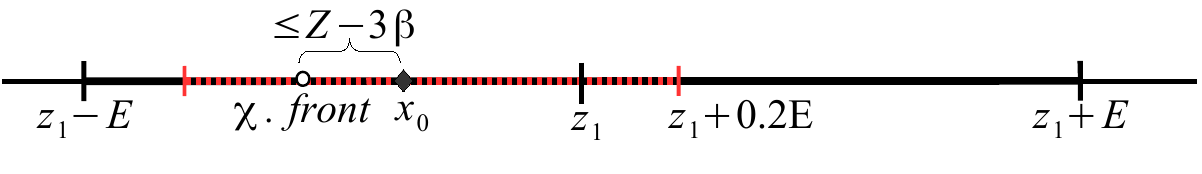}
                \caption{Point
                $x_0$ where the alarm is called once the head encountered the island,
                is always to the left of $z_{1} + 0.2\E$, therefore
                $\front(\chi) < z_{1} + 0.2\E$ as well \label{fig:disturbed-recovery1}
            }
        \end{figure}

          Since $x_0 \in \lint{z_1 -0.3\E}{z_1 + 0.2\E}$, the front cannot be in
          $\lint{z'_{1} - 0.3\E}{z'_{1} +0.3\E}$, and the Correction
          Lemma~\ref{lem:correction} yields $\widehat{\R'} = (\R')_4^0$.

          Once the recovery completes, the head is put
          into $z_1$, where a new alarm will be called when the marked cells are
          discovered during zigging.
          The new recovery area after this alarm is $\R$ again, and
          the process eliminates the remaining parts of $\D(t_1)$,
          leading to relief.

%
%

        \end{prooofi} 
    \end{prooofi} 

    \begin{step+}{step:disturbed.alarm-later}
        Suppose that alarm will not be called within $2\Z$ steps after the burst.
    \end{step+}
    \begin{prooofi}
        \begin{step+}{step:disturbed.normal}
            Suppose that the burst brings the machine to normal mode.
        \end{step+}
        \begin{prooofi}
            If $J_{0}\supseteq \D(t_{1})$ then the proof of
            Lemma~\ref{lem:burst} is applicable.
            Otherwise, as zigging meets the marked cells in
            $\D(t)$ within $2\Z$ steps, a new alarm will be called,
            and part~\ref{step:disturbed.alarm-soon} is applicable.
        \end{prooofi}

        If a burst occurs while the head is near the boundary of the recovery
        interval, then it may leave an island outside the recovery
        interval (within distance of $\E + \beta$ from $z_1$),
        provided that after the burst the recovery continues seamlessly where it
        was interrupted.

    \begin{step+}{step:disturbed.to-marking}
        Suppose that the stage after the burst is $\Marking$.
    \end{step+}
    \begin{prooofi}
        If the recovery process continues the old one seamlessly, then it
        terminates with success.

        Otherwise, since the marking stage employs zigging,
        alarm occurs within $2\Rec.\Z < 2\Z$ steps.
        From then on, an analysis identical to the
        one in the proof of Lemma~\ref{lem:burst} shows
        that the cells marked after the burst
        will be contained in the recovery area created by the new alarm.
        To what happens after, the analysis of
        part~\ref{step:disturbed.alarm-soon} is applicable.
    \end{prooofi} 

    \begin{step+}{step:disturbed.to-post-marking}
        Suppose that the stage after the burst is $\Planning_{i}$ or $\Mopping$.
    \end{step+}
    \begin{prooofi}
        Since these stages expect to walk over a recovery area,
        they must seamlessly continue what went on before, except
        for changing the state and the content of some cells in an
        island---otherwise alarm occurs immediately.

        If the burst occurs during $\Planning_{1}$,
        and it changes what is computed, then
        $\Planning_{2}$ will notice this and trigger alarm.
        Since this alarm occurs in the existing marked
        area $\D(t_{1})$, the analysis of
        part~\ref{step:disturbed.alarm-soon} still applies.

        If the burst occurs during $\Planning_{2}$
        or $\Mopping$ then it either triggers alarm,
        in which case the above analysis applies,
        or it allows the recovery process to end, with the lasting effect of the
        burst restricted just to the island $J_{0}$.

        Lemma~\ref{lem:correction} guarantees
        that whatever assignments $\cCore\gets\cRec\Core$
        were made in the mopping stage, they are admissible;
        even if mopping will be interrupted by a burst
        (and then continued as mopping).
    \end{prooofi} 
    \end{prooofi} 

    To bound the duration of relief, we note that at the worst
    case in part~\ref{step:disturbed.alarm-soon},
    the first recovery initiated by the island can reach only up to mopping.
    After the burst, at most two other full
    recovery cycles occur with at most $2\Z \ll \E$
    steps before them.
    Hence the total duration of the relief is
    $\leq 3\K_\R$.

\end{Proof}

\subsection{Finishing the proof}

The following lemma implies the main theorem.

\begin{lemma}[Recovery]\label{lem:recovery}
    Assume that machine $M_{1}$ starts working on a tape configuration
    of the form $\varphi_{*}(x)$.
    Every \df{$(\beta,V)$-noisy trajectory} of $M_{1}$ can be annotated.
\end{lemma}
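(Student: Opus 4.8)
The plan is to build the annotation by induction on time, keeping as invariant that the prefix of the history through time $t$ is an annotated history in the sense of Definition~\ref{def:annotated-hist}; in particular every individual configuration is annotated (Definition~\ref{def:annotated-config}) and the extra requirements on the prefix hold. At time $0$ the configuration $\varphi_{*}(x)$ is healthy by the construction of Section~\ref{subsec:coding}, so we take $\chi=\xi$, $\cI=\cS=\emptyset$, $\D=\emptyset$; this is a centrally consistent annotated configuration and the requirements hold vacuously.

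A stretch of time that is distress-free and free of bursts is annotated by the noise-free analysis: the head performs the ordinary simulation, so we let $\chi$ continue along the $M_{1}$-trajectory issuing from $\chi(t)$ (Lemma~\ref{lem:errfreesim}), carry the islands along unchanged, and carry the stains along with one exception — once the computation phase of a work period has run to completion without an intervening burst, its threefold recomputation followed by the majority vote overwrites the $\cInfo$ and $\cState$ tracks of the base colony, so from $\Sweep=\TransferStart$ on the base colony is stain-free. This is where the stain requirement of Definition~\ref{def:annotated-hist} is verified, using that $V$ is chosen large enough that at most one burst can fall inside any one work period, so that the $(\beta,2)$-burst-error-correcting code $(\upsilon_{*},\upsilon^{*})$ together with the three candidate slots $\cHold[1..3]$ absorb the effect of a single burst on the simulated content.

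The substantive step is the occurrence of a distress event. I will argue it can arise in only two ways: either (i) a burst strikes at a distress-free time, or (ii) the head, in normal mode and with no burst within the last $V$ steps, steps onto an island — and then, as noted before Lemma~\ref{lem:disturbed}, this can only happen while executing a transfer sweep into a neighbouring colony, since in every other situation the local structure around the head is recomputable from the state and zigging would already have flagged any earlier damage. In case (i) we invoke Lemma~\ref{lem:burst} to obtain a relief event of duration $\le\K_{\R}+\Z$ and extent $\le 3\E$. In case (ii), if the recovery triggered by the alarm runs undisturbed then, since the alarm is raised within $2\Z$ of the front (the head has just crossed the colony boundary), Lemma~\ref{lem:alarm} gives relief within $\K_{\R}$ steps; if instead a burst lands during this recovery, that is exactly the scenario treated by Lemma~\ref{lem:disturbed}, yielding a relief event of duration $\le 3\K_{\R}$ and extent $\le 3\E$. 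A further burst cannot interfere in any of these cases, since bursts are $\ge V\gg 3\K_{\R}$ apart. We extend the annotation through the relief event as these lemmas prescribe: $\chi$ is replaced by the healthy configuration they produce, the distress area shrinks back to $\D=\emptyset$, and at most one leftover island remains — attributed to a burst inside the interval, with its zigging fields set so the head will not balk at it again, and with the sweep direction adjusted exactly as the definition of relief event requires.

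What remains, and what I expect to be the bulk of the work, is to check that after each such extension the configuration still meets the structural constraints of Definition~\ref{def:annotated-config} — at most one island meeting the workspace, at most two per colony off it with one pinned within $\E+5\beta$ of the boundary towards the base, stains confined to within $\E+\beta$ of a colony boundary, distress area $\le 3\E$ — and that these are preserved through the ensuing distress-free stretch. The reconstructed cells are controlled by the Correction Lemma~\ref{lem:correction}, which bounds the number of changed cells beyond the islands by $\Z-3\beta$; the island-counting and island-location bounds are maintained because recovery intervals are \emph{aligned} to colony boundaries (Remark~\ref{rem:explanation-R}), which is precisely what blocks the sliding accumulation of islands exhibited in Example~\ref{ex:aligned}. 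The main obstacle is therefore the case analysis by position within the work period (computation phase, transfer sweep, last sweep before moving the base) and by the value of $\Drift$, verifying in each case that a ``smart'' leftover island is correctly placed relative to the workspace of the healthy configuration we attach, and that its combination with the single new island from the next burst never yields two islands in one workspace or exceeds the per-colony quota.
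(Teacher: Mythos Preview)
Your overall approach mirrors the paper's: induction on time, distress events handled by invoking Lemmas~\ref{lem:burst}, \ref{lem:alarm}, and~\ref{lem:disturbed}, and the distress-free evolution handled by tracking how the simulation rebuilds $\cInfo$, $\cState$, $\cDrift$ through the threefold computation and majority vote. The structure is right, and you correctly identify the role of alignment in preventing island accumulation.

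There is, however, a gap in your case split for distress events. Your dichotomy is (i) a burst at a distress-free time, or (ii) stepping onto an island \emph{with no burst in the last $V$ steps}. You do not cover the third possibility: the head steps onto an island left over from a \emph{recent} burst --- for instance the island $J_{0}$ that Lemma~\ref{lem:burst} permits to survive a relief event, revisited on a later sweep within the same work period. Your justification ``zigging would already have flagged any earlier damage'' establishes only that case~(ii), under its stated hypothesis, is confined to the transfer phase; it does not rule out this third case. The paper organizes the split differently: it asks whether a burst \emph{follows} within $3\K_{\R}$ steps of the island encounter. If not, Lemma~\ref{lem:alarm} applies (or, if no alarm even sounds, the traversed island cells are simply downgraded to a stain); if so, then no burst was recent, hence any workspace island has already been cleared by earlier passes, hence the encounter must be in the transfer sweep --- exactly the hypothesis of Lemma~\ref{lem:disturbed}. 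Reordering your case analysis this way closes the gap.

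A smaller omission: you assume that stepping onto an island always raises alarm and launches recovery. But an island may carry structurally coordinated $\cCore$ values with only $\cInfo$/$\cState$ corrupted; then no alarm sounds, zigging passes over it, and the correct bookkeeping is to reclassify the traversed portion as a stain rather than to invoke Lemma~\ref{lem:alarm}. The paper treats this sub-case explicitly.
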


\begin{Proof}
    Assume that the history has been annotated in an
    admissible way up to a certain time.
    First we show that in case a distress event occurs,
    the annotation can be extended to keep
    property~\ref{def:annotated-hist}
    (\ref{i:annotated-hist.distress}).
    Then using this, we will show that in case of no distress,
    the annotation can also be extended in an admissible
    way while keeping the other  properties.

    \begin{step+}{step:recovery.distress}
        Consider property~\ref{def:annotated-hist}
        (\ref{i:annotated-hist.distress}).
    \end{step+}
    \begin{prooofi}
        If a distress event occurs due to a burst then
        Lemma~\ref{lem:burst} applies.

        Assume now that a distress event occurs due to
        stepping onto an island $J_{1}$.

        Assume first that no burst occurs in the following $3\K_{\R}$ steps.
        Now, if no alarm sounds within $2\Z$ steps,
        then zigging guarantees that the part of the island passed over can be
        replaced with a stain.
        (The only way not to pass some part is when the island is in a neighbor
        colony at distance $\approx 4\beta$ from the boundary: zigging may reach
        just a part of it.)
        If alarm sounds, then Lemma~\ref{lem:alarm} is applicable.

        If there will also be a burst within the
        following $3\K_\R$ steps, then there has not occurred
        any burst recently (within $V$ steps).
        There could not have been any distress in the last sweep:
        indeed, any earlier island on which the head could have
        stepped would have been eliminated (at least its
        part in the path of the head) without or with alarm,
        as seen in the previous paragraph.
        But then the only way to step on an island is under
        the conditions of Lemma~\ref{lem:disturbed}.
    \end{prooofi} 

    \begin{step+}{step:recovery.progress}
        Consider property~\ref{def:annotated-hist}
        (\ref{i:annotated-hist.progress}).
    \end{step+}
    \begin{prooofi}
        Assume an admissible annotated history until
        a distress-free time $t$.
        We will show that by just keeping the islands
        constant, the annotation is extendable in an
        admissible way to $t+1$.
        In particular, there will still be a satisfying
        healthy configuration.

        Looking at Definition~\ref{def:healthy} of
        healthy configurations, most properties are
        obviously preserved in each step by just the
        form of the transition rule.
        The exceptions are the property
        which requires that the $\cDrift$ track holds constant
        values in certain intervals at certain times,
        and the property which requires that
        $\cInfo$ and $\cState$ tracks hold
        valid codewords of the code $(\varphi_{*}, \varphi^{*})$
        defined in Section~\ref{subsec:coding}.

        So we are only concerned with the recomputation of the values
        of $\cState$, $\cInfo$, $\cDrift$ in the base colony,
        during the computation phase, and then the
        transfer of $\cState$ during the transfer phase.
        (The value of $\cDrift$ in the neighbor colonies
        is inherited from earlier,
        and its spreading from $\Drift$ is watched over by
        the coordination requirement: a change would
        trigger alarm at zigging.)

        Recall the structure and the tasks of the
        computation phase given in~\ref{subsec:computation-phase}.
        By properties of annotated configurations in
        Definition~\ref{def:annotated-config},
        in the base colony, besides a possible island,
        there is at most 1 more stain of size $\beta$,
        and possibly more stains, all contained in a single interval
        of size $\E + \beta$.
        (The bound comes from the length of possible penetration of the head
        in a neighboring colony while faults could occur.)
        These last stains can be ignored, since our code is defined in such a
        way that it places a codeword of the $(\beta,2)$ burst-error-correcting
        code $(\upsilon_{*},\upsilon^{*})$ at a distance $1.1\E$ away from the
        colony boundaries.

        The recovery rules do not change the
        $\cHold$, $\cState$ and $\Info$ tracks, and
        given that $\Sweep<\TransferStart$,
        they do not change $\cDrift$ track either.
        Therefore, since there are at most 2 stains at distance $1.1\E$ from the
        boundaries, and our code is $(\beta,2)$ burst-error-correcting,
        the result of decoding from the $\Info$ and $\State$
        tracks during the computation phase
        will be the same as if the configurations had been stainless all along.

        Even if a fault causes the head to step into
        a neighbor colony that can be empty and set
        $\Sweep = \TransferSw(\pm 1)$,
        after at most $2\Z$ steps, the head will step back inside
        the colony it came from, and it will call alarm there.
        Since $\E \gg \Z$, the distress area will contain entirely
        this segment of cells.

        Any distress event will directly affect at most one
        of the three repetitions of the computation phase:
        the configuration is centrally consistent during the others.
        Consequently, the correct values will will be
        stored in track $\cHold[i]$, $i\in\{1,2,3\}$ for all but one $i$.
        If the sweep of the field majority computation
        during the encoding stage of the computation phase
        is distress-free, then every  cell will receive
        the correct value $\maj(\cHold[1\dots 3])$.
        But even if distress occurs in this sweep, relief
        guarantees that all cells but the ones in the
        island of the burst that caused the distress will hold the
        correct value.

        The same argument proves the property that the
        newly computed $\cState$ will be correctly transferred
        to the extended base colony in the transfer phase.
    \end{prooofi} 

    \begin{step+}{step:recovery.stains}
        Consider property~\ref{def:annotated-hist}\eqref{i:annotated-hist.stain}.
    \end{step+}
    \begin{prooofi}
        From the above argument it is clear that
        the only possible stain remaining in the
        base colony is the one created by a burst
        in the current work period.
        On the other hand, we can add stains and
        islands to neighbor colonies.

        Let us see what is the farthest distance to which we can intrude
        into a neighbor colony and leave islands.
        With zigging, the head can penetrate
        at most $4\beta$ cells into the neighbor colony,
        where it can find an island causing alarm.
        A burst ocurring anywhere in the recovery interval created by
        this alarm may leave a stain anywhere
        within distance of $\E + \beta $ from the colony boundary (where
        the recovery interval is centered).
    \end{prooofi} 
\end{Proof}

\begin{lemma}[Simulation]\label{lem:simulation}
    Under the conditions of Lemma~\ref{lem:recovery},
    via some simulation function $\Phi^{*}$
    (to be defined in the proof of the  present lemma),
    the movement of the base colony corresponds to the
    head movement of the simulated machine $M_{2}$
    (scaled up by a factor of $Q$).
    Whenever the sweep in the free cells of the base colony
    is not one of switching to a new work period,
    the array of $\cState$ values there decodes into the state of $M_{2}$, and
    the array of $\cInfo$ values decodes
    into the current tape cell symbol of $M_{2}$.
\end{lemma}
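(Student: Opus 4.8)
The plan is to imitate the fault-free argument of Lemma~\ref{lem:errfreesim}, replacing ``the history is a trajectory'' by ``the history admits an annotation'' (Lemma~\ref{lem:recovery}), and to borrow the recomputation analysis already carried out inside the proof of Lemma~\ref{lem:recovery}. First I would fix, via Lemma~\ref{lem:recovery}, an annotation of the given $(\beta,V)$-noisy trajectory $\eta$, with underlying configurations $\xi(0),\xi(1),\dots$ and healthy witnesses $\chi(0),\chi(1),\dots$; I declare the base colony of $\xi(t)$ to be the base colony of $\chi(t)$ (well defined by Definition~\ref{def:healthy}), and write $h_{2}(t):=M_{2}(x,t).\pos$. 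Next I would carve $\eta$ into work periods: set $\tau_{0}:=0$, and let $\tau_{m+1}$ be the first distress-free time after $\tau_{m}$ at which the head sits on the first cell of the base colony with $\Sweep=1$, i.e.\ rule $\MoveBase$ of the previous period has just finished. This is finite because $V$ is quadratic in $Q$ with a sufficiently large constant, hence exceeds the $O(Q^{2})$ length of one work period together with the $3\K_{\R}=O(\beta)$ steps a relief event can cost (Lemmas~\ref{lem:burst} and~\ref{lem:disturbed}): so between two consecutive bursts the machine, tracked through its healthy witness, completes at least one whole work period, and in particular at most one burst lies in each work period. I then define $\Phi^{*}(\eta)(m):=\varphi^{*}(\xi(\tau_{m}))$, reading the base-colony location off $\chi(\tau_{m})$: its state is $\upsilon^{*}$ of the codeword in the $\cState$ track of the base colony (the addresses between $1.1\E$ and $Q-1.1\E$), its head position is the base position divided by $Q$, and its symbol at $p$ is $\upsilon^{*}$ of the codeword in the $\cInfo$ track of the $p$-th colony.

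The core is an induction on $m$ showing $\Phi^{*}(\eta)(m)=M_{2}(x,m)$, together with the stronger per-time claim that at every distress-free time $t\in\lint{\tau_{m}}{\tau_{m+1}}$ whose sweep is not among those of the concluding field-majority recomputation of the computation phase or of the transfer/move-base part, the free cells of the base colony carry a $\cState$ track decoding through $\upsilon^{*}$ to $M_{2}(x,m).\state$ and a $\cInfo$ track decoding to $M_{2}(x,m).\tape[h_{2}(m)]$. The base case $m=0$ is the definition of $\varphi_{*}$ in Section~\ref{subsec:coding}. For the inductive step I trace work period $m+1$. If no burst falls inside it, then on the free cells $\eta$ coincides with its healthy witness, so the restriction of $\eta$ to $\lint{\tau_{m}}{\tau_{m+1}}$ is a fragment of a fault-free trajectory and the claim follows exactly as in Lemma~\ref{lem:errfreesim}. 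If one burst falls inside, it produces (Lemmas~\ref{lem:alarm}, \ref{lem:burst} and~\ref{lem:disturbed}) exactly one distress event, immediately followed by a relief event of duration $\le 3\K_{\R}$ and extent $\le 3\E$, leaving at most one residual island of size $\le\beta$. Since $3\K_{\R}=O(\beta)$ is far below the $\Omega(Q^{2})$ length of one decode--transition--encode repetition and of one field-majority sweep, the distress touches at most one of the three repetitions of the computation phase and at most one of its two concluding majority sweeps; hence at least two of the three $\cHold$ candidates are correct, the recomputed $\cState$, $\cInfo$ and $\cDrift$ are correct in every cell outside the residual island, and in particular $\Drift$ equals the direction of $\delta_{2}(\hat a,\hat g)$. (This is precisely the recomputation argument carried out in the proof of Lemma~\ref{lem:recovery} for property~\ref{def:annotated-hist}(\ref{i:annotated-hist.progress}).) Consequently the transfer phase shifts the base by $Q\cdot\Drift$, so the new base colony lands at $Q\cdot h_{2}(m+1)$; and since the only structural damage left in the codeword region of any colony is at most one island plus at most one stain, i.e.\ at most two intervals of size $\le\beta$, the $(\beta,2)$-burst-error-correcting code $(\upsilon_{*},\upsilon^{*})$ still decodes the $\cState$ and $\cInfo$ tracks to the correct $M_{2}$ values. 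This yields $\Phi^{*}(\eta)(m+1)=M_{2}(x,m+1)$ and, propagating through the non-switching sweeps of period $m+1$ (where the $\cState$ and $\cInfo$ tracks of the base colony are untouched), the stronger per-time statement.

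I expect the main obstacle to be the work-period bookkeeping in the faulty regime: making $\tau_{m}$ genuinely well defined, proving that the base colony of the witness moves by exactly $Q\cdot(h_{2}(m)-h_{2}(m-1))$ across a period that may contain a burst, and pinning down exactly which sweeps count as ``switching to a new work period'' so that the decoding statements are unambiguous --- during the encoding stage the $\cInfo$ track briefly holds the new symbol written at the \emph{old} head position, and the base colony itself is ambiguous during transfer. The remaining delicate point, which I would flag but treat as already secured by the parameter choices $\Z=22\beta$, $\E=30\Z$, $\E\mid Q$ and $V=O(Q^{2})$ with a large constant, is the inequality $3\K_{\R}\ll$ (length of one repetition): this is exactly what forbids a single short relief event from corrupting two of the three repetitions. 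Everything else reduces to citing Lemma~\ref{lem:recovery}, Lemma~\ref{lem:errfreesim} and the Correction Lemma~\ref{lem:correction}.
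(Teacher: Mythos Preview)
Your proposal is correct and follows essentially the same approach as the paper: invoke Lemma~\ref{lem:recovery} to obtain an annotation, read off the base colony from the healthy witness $\chi$, and defer the correctness of the decoded $\cState$, $\cInfo$, $\cDrift$ values to the recomputation analysis already carried out in part~\ref{step:recovery.progress} of the proof of Lemma~\ref{lem:recovery}. The paper's proof is much terser---it does not explicitly introduce the markers $\tau_{m}$ or run the induction on $m$, but simply declares the base colony at distressed times to be that of the last distress-free time and cites part~\ref{step:recovery.progress} for the rest---so your version spells out what the paper leaves implicit.
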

\begin{proof}
    Lemma~\ref{lem:recovery} gives us an admissible history.
    At all distress-free times, it also defines uniquely a
    base colony.
    For distressed times, let the base colony be equal to that
    of the last distress-free time.
    Once a base colony is given for each configuration, the
    simulation function is also uniquely defined: we decode
    the simulated cell content of each cell of
    $M_{2}$ from the corresponding colony, and the simulated
    state from the $\cState$ array of the base colony.
    Part~\ref{step:recovery.progress} of the proof of
    Lemma~\ref{lem:recovery} shows that the
    decoding indeed defines a trajectory of $M_{2}$.
\end{proof}

\begin{proof}[Proof of Theorem~\protect\ref{thm:main}]
    The statement follows essentially from Lemma~\ref{lem:simulation},
    adding only the following.
    Let $f$ be a projection from the alphabet of $M_1$ to the alphabet of $M_2$,
    defined by $f(s)=s.\cInfo$.
    Consider now the cell at the origin of the tape.
    Then, relation~\eqref{eq:main-thm}
    holds due the step~\ref{i:comp.write} of the computation procedure in
    section~\ref{subsec:computation-phase}.

    What are all the lower bounds on $Q$?
    Since the program of the machine $M_2$ must fit in a colony,
    $Q$ is lowerbounded by $p_2$.
    Definitions~\eqref{eq:Z}  and~\eqref{eq:Expansion} show $\E=O(\beta)$.
    We needed to be able to define the code $(\phi_{*},\phi^{*})$ in
    Section~\ref{subsec:coding} fitting into the part of the colony away by
    $1.1\E$ from the boundary.
    These requirements are satisfied with $Q$ depending linearly on
    $\log{|\Sigma_2|}$, $\log{|\Gamma_2|}$ and $\E$.

    The computation phase lasts $O(Q)$ steps, where we also used the requirement
    in Definition~\ref{def:univ-TM}.
    The transferring of the $\State$ into the neighboring
    colony will need $Q$ sweeps, that is $O(Q^{2})$ steps.
    Therefore the constant $V$ bounding the time
    overhead of machine $M_1$ is $V = O(Q^2)$.
\end{proof}

\section{Conclusions and future work}

    In this paper we have shown that for any Turing machine there is one that
    can simulate it while while correcting occasional violations
    of its own transition function.
    The procedure recovering the simulation structure is based on an
    organization in which any
    group of cells affected by the faults is
    surrounded by cells that conserve some valid traces
    of the computation.

    We hope to use this construction, similarly to~\cite{GacsSorg97},
    as a building block in a more complex construction of a Turing
    machine that can resist faults occurring independently with
    small probability.

    To the best of our knowledge, this is the first construction
    of a reliable sequential machine.
    An interesting question is if the Turing machines are the
    simplest machines that can perform universal computation
    under isolated bursts of noise.
    It seems that simpler models, like the counter machines of~\cite{Minsky},
    are insufficient, but
    there are some interesting questions open concerning the nature of their
    insufficiency.

\bibliographystyle{amsplain}

\end{document}